\begin{document}
\newtheorem{ach}{Achievability}
\newtheorem{con}{Converse}
\newtheorem{definition}{Definition}
\newtheorem{theorem}{Theorem}
\newtheorem{lemma}{Lemma}
\newtheorem{example}{Example}
\newtheorem{cor}{Corollary}
\newtheorem{prop}{Proposition}
\newtheorem{conjecture}{Conjecture}
\newtheorem{remark}{Remark}
\title{Degrees of Freedom of Full-Duplex Multiantenna Cellular Networks}
\author{\IEEEauthorblockN{Sang-Woon Jeon,~\IEEEmembership{Member,~IEEE}, Sung Ho Chae,~\IEEEmembership{Member,~IEEE},
\\and Sung Hoon Lim,~\IEEEmembership{Member,~IEEE}}
\thanks{This work has been supported by the Basic Science Research Program through the National Research Foundation of Korea (NRF) funded by the Ministry of Education, Science and Technology (MEST) [NRF-2013R1A1A1064955].}
\thanks{The material in this paper was presented in part at the IEEE Global Communications Conference (GLOBECOM), Austin, TX, December 2014 and has been submitted in part at the IEEE International Symposium on Information Theory (ISIT), Hong Kong, China, June 2015.}
\thanks{S.-W. Jeon is with the Department of Information and Communication Engineering, Andong National University, Andong, South Korea (e-mail: swjeon@anu.ac.kr).}%
\thanks{S. H. Chae, the corresponding author, is with the Digital Media \& Communications (DMC) Research Center, Samsung Electronics, Suwon, South Korea (e-mail: sho.chae00@gmail.com).}%
\thanks{S. H. Lim is with the School of Computer and Communication Sciences,
Ecole Polytechnique F{\'e}d{\'e}rale de Lausanne (EPFL), Lausanne,
Switzerland (e-mail: sung.lim@epfl.ch).}%
}
 \maketitle

\begin{abstract}
We study the \emph{degrees of freedom (DoF)} of cellular networks in which a \emph{full duplex (FD) base station (BS)} equipped with multiple transmit and receive antennas communicates with multiple mobile users. We consider two different scenarios. In the first scenario, we study the case when {\em half duplex (HD) users}, partitioned to either the uplink (UL) set or the downlink (DL) set, simultaneously communicate with the FD BS. In the second scenario, we study the case when {\em FD users} simultaneously communicate UL and DL data with the FD BS. Unlike conventional HD only systems, inter-user interference (within the cell) may severely limit the DoF, and must be carefully taken into account. With the goal of providing theoretical guidelines for designing such FD systems, we completely characterize the sum DoF of each of the two different FD cellular networks by developing an achievable scheme and obtaining a matching upper bound. The key idea of the proposed scheme is to carefully allocate UL and DL information streams using interference alignment and beamforming techniques.
By comparing the DoFs of the considered FD systems with those of the conventional HD systems, we establish the DoF gain by enabling FD operation in various configurations. As a consequence of the result, we show that the DoF can approach the two-fold gain over the HD systems when the number of users becomes large enough as compared to the number of antennas at the BS.
\end{abstract}
\begin{IEEEkeywords}
Cellular network, degrees of freedom, full duplex, interference alignment, multiantenna technique.
\end{IEEEkeywords}
 \IEEEpeerreviewmaketitle

\section{Introduction}
Current cellular communication systems operate in half-duplex (HD) mode by transmitting and receiving either at different time slots or over different frequency bands. The system is designed in such a way that the downlink (DL) and uplink (UL) traffics are structurally separated by time division duplexing (TDD) or frequency division duplexing (FDD). The advantage of such design principle is that it avoids the high-powered self-interference that is generated during simultaneous transmission and reception. Recent results~\cite{Choi10, Aryafar12,Khandani13, Duarte10, Jain11, Bharadia13}, however, have demonstrated the feasibility of \emph{full-duplex} (FD) wireless communication by suppressing or cancelling self-interference in the RF and baseband level. Various practical designs to realize self-interference cancellation have been proposed in the literature, including adding additional antennas~\cite{Aryafar12}, adding auxiliary transmit RF chains~\cite{Khandani13} or auxiliary receive RF chains~\cite{Duarte10}, using polarization~\cite{Duarte10, Khandani13}, employing balun circuits~\cite{Jain11}, and many more. For more details, see~\cite{Bharadia13, Hong13} and the references therein.

By enabling simultaneous transmission and reception, FD radio is expected to double the spectral efficiency of current HD systems~\cite{Hong13}, and is considered as one of the key technologies for next generation communication systems. Evidently, in situations where the base station (BS) and the user simultaneously transmit bidirectionally as shown in Figure~\ref{fig:first}, enabling FD doubles the overall spectral efficiency. This point-to-point bidirectional communication example, however, is just one instance of how a FD cellular system will function.

In some practical cases, the system may have to support HD users which do not have FD radio due to extra hardware burden on mobile devices. In such case, the FD BS can simultaneously communicate with two sets of  users, one receiving DL data from the BS and the other transmitting UL data to the BS (Figure~\ref{fig:second}).
In another configuration shown in Figure~\ref{fig:third}, for instance, when the BS has many more antennas compared to each user, the FD BS may wish to simultaneously communicate with multiple FD users using multi-user multiple-input and multiple-output (MIMO) techniques. Since the BS is simultaneously transmitting and receiving, there is potential to double the overall spectral efficiency compared to the conventional HD only systems. However, the configurations shown in Figures~\ref{fig:second} and \ref{fig:third} induce a new source of interference that does not arise in HD only networks. In Figure~\ref{fig:second}, since user 1 is transmitting to the BS while user 2 is receiving from the BS, the transmission from user 1 causes interference to user 2. Similarly, in Figure~\ref{fig:third}, the UL transmission of the users causes interference to the DL reception to each other. In cases where this type of interference is strong and proper interference mitigation techniques are not applied, the gain of having FD radios can be severely limited even when self-interference is completely removed.

\begin{figure}[t]
\footnotesize
\begin{center}
\subfigure[Bidirectional full-duplex.]{%
            \label{fig:first}
            \includegraphics[width=0.24\textwidth]{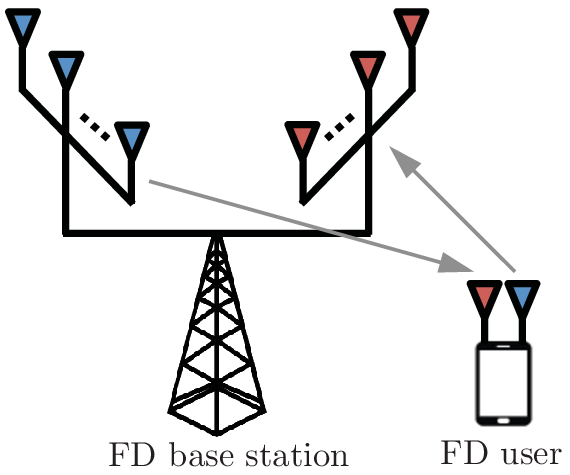}
        }%
\subfigure[Full-duplex at the BS only.]{%
   \label{fig:second}
   \includegraphics[width=0.32\textwidth]{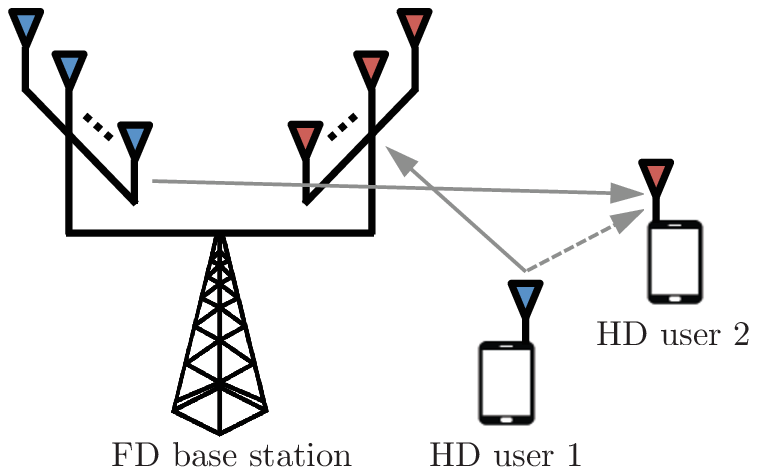}
        }
\subfigure[Full-duplex at both the BS and the users.]{%
   \label{fig:third}
   \includegraphics[width=0.34\textwidth]{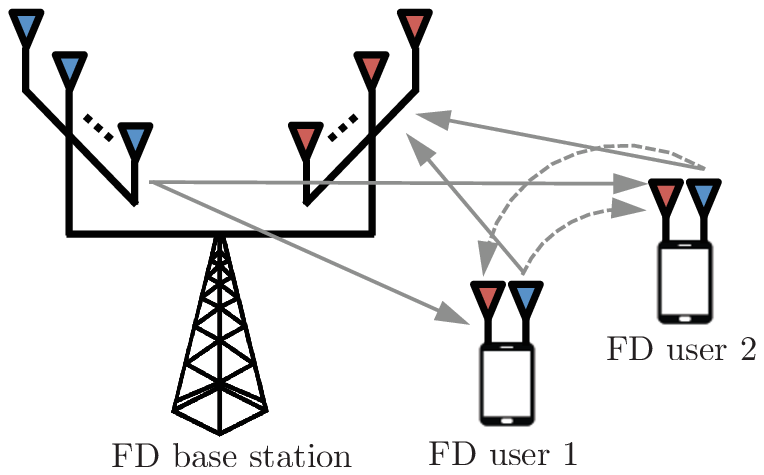}
        }
\end{center}
\caption{Full-duplex network configurations.}
\vspace{-4mm}
\end{figure}

To manage inter-user interference and fully utilize wireless spectrum with FD operation, in this paper we employ signal space \emph{interference alignment} (IA) schemes optimized for FD networks including the cases in  Figure 1. Initially proposed by the seminar works in \cite{Jafar08,Maddah-Ali:08,Cadambe107}, IA is a coding technique that efficiently deals with interference and is known to achieve the optimal DoF for various interference networks~\cite{Suh08,Suh:11,Viveck1:09,Viveck2:09,Tiangao:10,Annapureddy:11,Ke:12,Tiangao:12,Jeon4:12,JeonSuh:14,Sahai13,Nazer11:09}. Especially, it is shown that IA can be successfully applied to mitigate interference in various cellular networks, such as two-cell cellular networks~\cite{Suh08,Suh:11} and multiantenna UL--DL cellular networks~\cite{JeonSuh:14}. Furthermore, the idea of IA can also be applied to the (multi-user) bidirectional cellular network with ergodic phase fading~\cite{Sahai13}, in which the achievable scheme is based on the ergodic IA scheme proposed in \cite{Nazer11:09}.

Motivated by the aforementioned previous works related to IA, we propose the optimal transmission schemes that attain the optimal sum DoFs for two configurations: 1)
a cellular network with a multiantenna FD BS and HD users (Figure~\ref{fig:second}); 2) a cellular network with a multiantenna FD BS and FD users (Figure~\ref{fig:third}).
The key idea of the proposed schemes is to carefully allocate the UL and DL information streams using IA and beamforming techniques. The UL data is sent to the BS using IA such that the inter-user interference is confined within a tolerated number of signal dimensions, while the BS transmits in the remaining signal dimensions via zero-forcing beamforming for the DL transmission.

With the proposed schemes, our primary goal is to answer whether if FD operation can still double the overall spectral efficiency even in the presence of inter-user interference.
We answer this question by providing matching upper bounds with the proposed achievable schemes, completely characterising the sum DoFs of the considered networks.
As a consequence of the result, even in the presence of inter-user interference, we show that the overall DoF can approach the two-fold gain over HD only networks when the number of users becomes large as compared to the number of antennas at the BS. We further provide the DoF gain of the FD systems by considering various configurations (see Sections III and VI.).

\subsection{Previous Works}
In~\cite{Cadambe107}, Cadambe and Jafar proposed a novel interference management technique called \emph{interference alignment} (IA), which achieves the optimal sum DoF of $\frac{K}{2}$ for the $K$-user interference channel (IC) with time-varying channel coefficients.
In addition, for the case in which all channel
coefficients are constant, Motahari et al.~\cite{Motahari:09,Motahari:091} proposed a different type of IA scheme
based on number-theoretic properties of rational and
irrational numbers and showed that the optimal DoF of
$\frac{K}{2}$ is also achievable.
Later, alternative methods of aligning interference in the finite signal-to-noise regime has been also proposed in \cite{Nazer11:09,Jeon5:13,Jeon2:11,Jeon2:14}.
The concept of IA has been successfully adapted to various network environments, e.g., see \cite{Viveck2:09,Tiangao:10,Viveck1:09,Annapureddy:11,Ke:12,Tiangao:12,Jeon4:12} and the references therein.

The DoF of cellular networks has been first studied by Suh and Tse for both UL and DL environments, where inter-cell interference exists~\cite{Suh08,Suh:11}. It was shown that, for two-cell networks having $K$ users in each cell,
the sum DoF of $\frac{2K}{K+1}$ is achievable for both UL and DL. Thus, multiple users at each cell are beneficial for
improving the DoF of cellular networks. These models were further extended to more general cases in terms of the number of users and the number of antennas at each BS \cite{Kim:11,Shin:11,Liu2:13,Liu:13,Sridharan:13,Park:12}.
In addition, recently, the DoF of the multiantenna UL--DL cellular network consisting of DL and UL cells has been studied in \cite{JeonSuh:14, KimJeon:14}. For a cellular network with FD operation in the absence of self-interference, the DoF of the (multi-user) bidirectional case has been studied in~\cite{Sahai13} for ergodic phase fading setting. 

\subsection{Paper Organization}
 The rest of this paper is organized as follows. In Section II, we
describe the network model and the sum DoF metric considered in this paper. In Section
III, we present the main results of the paper and intuitively explain
how FD operation can increase the DoF. In Sections IV and V, we provide
the achievability and converse proofs of the main theorems, respectively. In Section VI, we discuss the impacts of self-interference and scheduling on the DoF. Finally, we conclude in Section VII.

\textbf{Notations}: We will use boldface lowercase letters to denote vectors and boldface uppercase letters to denote matrices.
Throughout the paper, $[1:n]$ denotes $\{1,2,\cdots,n\}$, $\mathbf{0}_n$ denotes the $n\times 1$ all-zero vector, and $\mathbf{I}_n$ denotes the $n\times n$ identity matrix.
For a real value $a$, $a^+$ denotes $\max(0,a)$.
For a set of vectors $\{\mathbf{a}_i\}$, $\operatorname{span}(\{\mathbf{a}_i\})$ denotes the vector space spanned by the vectors in $\{\mathbf{a}_i\}$.
For a vector $\mathbf{b}$, $\mathbf{b}\perp\operatorname{span}(\{\mathbf{a}_i\})$ means that $\mathbf{b}$ is orthogonal with all vectors in $\operatorname{span}(\{\mathbf{a}_i\})$.
For a set of matrices $\{\mathbf{A}_i\}$, $\operatorname{diag}(\mathbf{A}_1,\cdots, \mathbf{A}_n)$ denotes the block diagonal matrix consisting of $\{\mathbf{A}_i\}$.

\section{Problem Formulation} \label{sec:problem}
For a comprehensive understanding of the DoF improvement by incorporating FD operation, we consider two types of network models: the first network model consists of a single FD BS which simultaneously transmits to a set of DL users (in HD mode) and receives from a set of UL users (in HD mode); the second model consists of a single FD BS communicating with a set of FD users. Unless otherwise specified, we simply denote BS for FD BS in the rest of this paper.

\subsection{Network Model}
In this subsection, we formally define the network models for the two cases mentioned above.

\begin{figure}[t!]
\begin{center}
\includegraphics[scale=0.8]{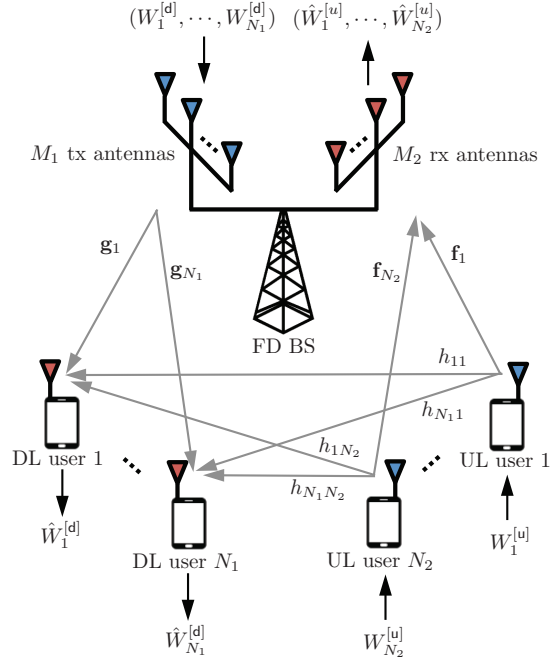}
\end{center}
\vspace{-0.15in}
\caption{The $(M_1,M_2,N_1,N_2)$ FD-BS--HD-user cellular network.}
\label{fig:model1}
\vspace{-0.1in}
\end{figure}

\subsubsection{FD-BS--HD-user cellular networks} \label{sebsec:FD_BS_HD_user}
This network model consists of a mixture of a FD BS and HD users. The HD users are partitioned into two sets, in which one set of users are transmitting to the BS, and the other set of users are receiving from the BS simultaneously. This cellular network is depicted in Figure \ref{fig:model1}. We assume that the FD BS is equipped with $M_1$ transmit antennas and $M_2$ receive antennas. On the user side, we assume that there are $N_1$ DL users and $N_2$ UL users, each equipped with a single antenna.
Here, each user is assumed to operate in HD mode. The BS wishes to send a set of independent messages $(W^{[{\sf d}]}_1,\cdots,W^{[{\sf d}]}_{N_1})$ to the DL users and at the same time wishes to receive a set of independent messages $(W^{[{\sf u}]}_1,\cdots,W^{[{\sf u}]}_{N_2})$ from the UL users.

For $i\in[1:N_1]$, the received signal of DL user $i$ at time $t$, denoted by $y^{[\sf d]}_i(t)\in\mathbb{R}$, is given by
\begin{align} \label{eq:inout1}
y^{[\sf d]}_i(t)=\mathbf{g}_i(t)\mathbf{x}^{[\sf bs]}(t)+\sum_{j=1}^{N_2}h_{ij}(t)x^{[\sf u]}_j(t)+z_i^{[\sf d]}(t)
\end{align}
and the received signal vector of the BS at time $t$, denoted by $\mathbf{y}^{[\sf bs]}(t)\in\mathbb{R}^{M_2\times 1}$, is given by
\begin{align} \label{eq:inout2}
\mathbf{y}^{[\sf bs]}(t)=\sum_{j=1}^{N_2}\mathbf{f}_j(t)x_j^{[\sf u]}(t)+\mathbf{z}^{[\sf bs]}(t),
\end{align}
where $\mathbf{x}^{[{\sf bs}]}(t)\in\mathbb{R}^{M_1\times1}$ is the transmit signal vector of the BS at time $t$, $x^{[{\sf u}]}_j[t]\in\mathbb{R}$ is the transmit signal of UL user $j$ at time $t$, $\mathbf{g}_i(t)\in \mathbb{R}^{1\times M_1}$ is the channel vector from the BS to DL user $i$ at time $t$, $h_{ij}(t)\in \mathbb{R}$ is the scalar channel from UL user $j$ to DL user $i$ at time $t$, and $\mathbf{f}(t)\in \mathbb{R}^{M_2\times 1}$ is the channel vector from UL user $j$ to the BS.
The additive noises $z_i^{[{\sf d}]}(t)\in\mathbb{R}$ and $\mathbf{z}^{[{\sf bs}]}(t)\in\mathbb{R}^{M_2\times 1}$ are assumed to be independent of each other and also independent over time, and is distributed as $z_i^{[{\sf d}]}(t)\sim \mathcal{N}(0,1)$ and $\mathbf{z}^{[{\sf bs}]}(t)\sim \mathcal{N}(\mathbf{0}_{M_2},\mathbf{I}_{M_2})$.

We assume that channel coefficients are drawn i.i.d. from a continuous distribution and vary independently over time.
It is further assumed that global channel state information (CSI) is available at the BS and each UL and DL user.
The BS and each UL user is assumed to satisfy an average transmit power constraint, i.e., ${\sf E}\left[\|\mathbf{x}^{[{\sf bs}]}(t)\|^2\right]\leq P$ and ${\sf E}\big[|x_j^{[{\sf u}]}(t)|^2\big]\leq P$ for all $j\in[1:N_2]$.

In the rest of the paper, we denote this network as a $(M_1,M_2,N_1,N_2)$ FD-BS--HD-user cellular network.
\medskip
\begin{remark}
We assume perfect self-interference suppression within the BS during FD operation. Hence there is no self-interference for the input--output relations in \eqref{eq:inout1} and \eqref{eq:inout2}. We will discuss how imperfect self-interference suppression effects the DoF in Section VI-A. \hfill$\lozenge$
\end{remark}

\begin{figure}[t!]
\begin{center}
\includegraphics[scale=0.8]{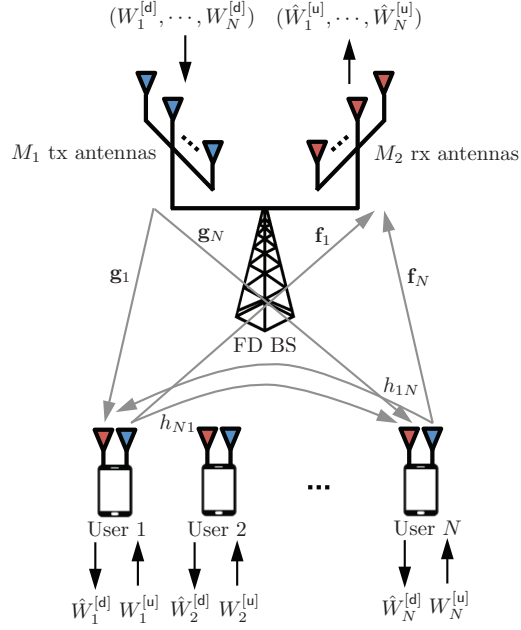}
\end{center}
\vspace{-0.15in}
\caption{The $(M_1,M_2,N)$ FD-BS--FD-user cellular network.}
\label{fig:model2}
\vspace{-0.1in}
\end{figure}

\subsubsection{FD-BS--FD-user cellular networks} \label{sebsec:FD_BS_FD_user}
In this model, we consider the case where both the BS and users have FD capability (depicted in Figure \ref{fig:model2}). As before, we assume that the BS is equipped with $M_1$ transmit antennas and $M_2$ receive antennas. However, unlike the FD-BS--HD-user cellular network, there is a single set of $N$ FD users, each equipped with a single transmit and a single receive antenna, that simultaneously transmits to and receives from the BS.
The BS wishes to send a set of independent messages $(W^{[{\sf d}]}_1,\cdots,W^{[{\sf d}]}_{N})$ to the users and at the same time wishes to receive a set of independent messages $(W^{[{\sf u}]}_1,\cdots,W^{[{\sf u}]}_{N})$ from the same users.

For $i\in[1,N]$, the received signal of user $i$ at time $t$ is given by
\begin{align} \label{eq:inout3}
y_i(t)=\mathbf{g}_i(t)\mathbf{x}^{[\sf bs]}(t)+\sum_{j=1,j\neq i}^{N}h_{ij}(t)x_j(t)+z_i(t)
\end{align}
and the received signal vector of the BS at time $t$ is given by
\begin{align} \label{eq:inout4}
\mathbf{y}^{[\sf bs]}(t)=\sum_{j=1}^{N}\mathbf{f}_j(t)x_j(t)+\mathbf{z}^{[\sf bs]}(t).
\end{align}
As before, we assume that self-interference at the BS and each user is completely suppressed, which is reflected in the input--output relations in \eqref{eq:inout3} and \eqref{eq:inout4}. The rest of the assumptions are the same as those of the $(M_1,M_2,N_1,N_2)$ FD-BS--HD-user cellular network.

In the rest of the paper, we denote this network as a $(M_1,M_2,N)$ FD-BS--FD-user cellular network.

\subsection{Degrees of Freedom} \label{subsec:dof_definition}
For each network model, we define a set of length $n$ block codes and its achievable DoF.

\subsubsection{FD-BS--HD-user cellular networks} \label{sebsec:dof_FD_BS_HD_user}
Let $W^{[{\sf d}]}_i$ and $W^{[{\sf u}]}_j$ be chosen uniformly at random from $[1:2^{nR^{[{\sf d}]}_i}]$ and $[1:2^{nR^{[{\sf u}]}_j}]$ respectively, where $i\in[1:N_1]$ and $j\in[1:N_2]$.
Then a $(2^{nR^{[{\sf d}]}_1},\cdots,2^{nR^{[{\sf d}]}_{N_1}},2^{nR^{[{\sf u}]}_1},\cdots,2^{nR^{[{\sf u}]}_{N_2}};n)$ code consists of the following set of encoding and decoding functions:
\begin{itemize}
\item {\em Encoding:} For $t\in[1:n]$, the encoding function of the BS at time $t$ is given by 
\begin{align*}
\mathbf{x}^{[\sf bs]}(t)=\phi_t(W_1^{[{\sf d}]},\cdots,W_{N_1}^{[{\sf d}]},\mathbf{y}^{[{\sf bs}]}(1),\cdots \mathbf{y}^{[{\sf bs}]}(t-1)).
\end{align*}
For $t\in[1:n]$, the encoding function of UL user $j$ at time $t$ is given by 
\begin{align*}
x_j(t)=\varphi_t(W_j^{[{\sf u}]}),
\end{align*}
where $j\in[1:N_2]$.
\item {\em Decoding:} Upon receiving $\mathbf{y}^{[{\sf bs}]}(1)$ to $\mathbf{y}^{[{\sf bs}]}(n)$, the decoding function of the BS is given by 
\begin{align*}
	\hat{W}^{[{\sf u}]}_j=\chi_j(\mathbf{y}^{[{\sf bs}]}(1),\cdots,\mathbf{y}^{[{\sf bs}]}(n),W_1^{[{\sf d}]},\cdots,W_{N_1}^{[{\sf d}]}) \text{ for } j\in[1:N_2]. 
\end{align*}
Upon receiving $y_i(1)$ to $y_i(n)$, the decoding function of DL user $i$ is given by 
\begin{align*}
\hat{W}^{[{\sf d}]}_i=\psi_i(y_i(1),\cdots,y_i(n)),
\end{align*} 
where $i\in[1:N_1]$.
\end{itemize}

A rate tuple $(R^{[{\sf d}]}_1,\cdots,R^{[{\sf d}]}_{N_1},R^{[{\sf u}]}_1,\cdots,R^{[{\sf u}]}_{N_2})$ is said to be \emph{achievable} for the FD-BS--HD-user cellular network if there exists a sequence of  $(2^{nR^{[{\sf d}]}_1},\cdots,2^{nR^{[{\sf d}]}_{N_1}},2^{nR^{[{\sf u}]}_1},\cdots,2^{nR^{[{\sf u}]}_{N_2}};n)$ codes such that $\Pr(\hat{W}^{[{\sf d}]}_i\neq W^{[{\sf d}]}_i)\to 0$ and $\Pr(\hat{W}^{[{\sf u}]}_j\neq W^{[{\sf u}]}_j)\to 0$ as $n$ increases for all $i\in[1:N_1]$ and $j\in[1:N_2]$.
Then the achievable DoF tuple is given by
\begin{align}
(d^{[{\sf d}]}_1,\cdots,d_{N_1}^{[{\sf d}]},d^{[{\sf u}]}_1,\cdots,d_{N_2}^{[{\sf u}]})=\lim_{P\to\infty}\left(\frac{R^{[{\sf d}]}_1}{\frac{1}{2}\log P},\cdots,\frac{R^{[{\sf d}]}_{N_1}}{\frac{1}{2}\log P},\frac{R^{[{\sf u}]}_1}{\frac{1}{2}\log P},\cdots,\frac{R^{[{\sf u}]}_{N_2}}{\frac{1}{2}\log P}\right).
\end{align}
We further denote the maximum achievable sum DoF of the FD-BS--HD-user cellular network by $d_{\Sigma,1}$, i.e.,
\begin{align}
d_{\Sigma,1}=\max_{(d^{[{\sf d}]}_1,\cdots,d_{N_1}^{[{\sf d}]},d^{[{\sf u}]}_1,\cdots,d_{N_2}^{[{\sf u}]})\in\mathcal{D}}\left\{\sum_{i=1}^{N_1}d_i^{[{\sf d}]}+\sum_{j=1}^{N_2}d_j^{[{\sf u}]}\right\},
\end{align}
where $\mathcal{D}$ denotes the DoF region of the FD-BS--HD-user cellular network.

\subsubsection{FD-BS--FD-user cellular networks} \label{sebsec:dof_FD_BS_FD_user}

Similar to the FD-BS--HD-user cellular network, we can define an achievable DoF tuple of the FD-BS--FD-user cellular network.
The key difference is that each user also operates in FD mode for this second model.
Specifically, the encoding function of user $i$ at time $t\in[1:n]$ is given by $x_i(t)=\varphi_t(W_i^{[{\sf u}]},y_i(1),\cdots,y_i(t-1))$ and the decoding function of user $i$ is given by $\hat{W}^{[{\sf d}]}_i=\psi_i(y_i(1),\cdots,y_i(n),W^{[{\sf u}]}_i)$, where $i\in[1:N]$.
Then the definition of an achievable DoF tuple $(d^{[{\sf d}]}_1,\cdots,d_{N}^{[{\sf d}]},d^{[{\sf u}]}_1,\cdots,d_{N}^{[{\sf u}]})$ is the same as that of the FD-BS--HD-user cellular network. Similarly, we denote the maximum achievable sum DoF of the FD-BS--FD-user cellular network by $d_{\Sigma,2}$.

\section{Main Results}

In this section, we state the main results of this paper. We completely characterize the sum DoFs of both the $(M_1,M_2,N_1,N_2)$ FD-BS--HD-user cellular network and the $(M_1,M_2,N)$ FD-BS--FD-user cellular network.

\begin{theorem} \label{thm:dof_1}
For the $(M_1,M_2,N_1,N_2)$ FD-BS--HD-user cellular network,
\begin{align} \label{eq:dof_1}
d_{\Sigma,1}=\min\left\{M_1+M_2,\max(N_1,N_2),\max\left(M_1+\frac{N_2(N_1-M_1)}{N_1},M_2+\frac{N_1(N_2-M_2)}{N_2}\right)\right\}.
\end{align}
\end{theorem}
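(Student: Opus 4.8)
The plan is to prove the two inequalities $d_{\Sigma,1}\le \text{RHS}$ and $d_{\Sigma,1}\ge \text{RHS}$ separately, writing $d_{\Sigma,1}=d^{[\sf d]}_\Sigma+d^{[\sf u]}_\Sigma$ with $d^{[\sf d]}_\Sigma=\sum_i d^{[\sf d]}_i$ and $d^{[\sf u]}_\Sigma=\sum_j d^{[\sf u]}_j$. For the converse I must establish all three members of the outer minimum as upper bounds; for achievability I must exhibit a scheme whose sum DoF meets the minimum in every regime, which in view of the $\max$ terms means producing two complementary schemes (one favoring the downlink, one favoring the uplink) together with the two single-direction corner schemes (pure zero-forcing downlink giving $\min(M_1,N_1)$ and pure uplink giving $\min(M_2,N_2)$).

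For the converse, the bound $d_{\Sigma,1}\le M_1+M_2$ is the easiest: the uplink is a clean $M_2$-antenna SIMO multiple-access channel, since $\mathbf y^{[\sf bs]}$ carries no downlink component, so $d^{[\sf u]}_\Sigma\le M_2$, while handing the $N_2$ uplink messages to every downlink user as genie information turns the downlink into an $M_1$-antenna MISO broadcast channel, giving $d^{[\sf d]}_\Sigma\le M_1$; here I would additionally provide $\mathbf y^{[\sf bs]}(1),\dots,\mathbf y^{[\sf bs]}(n)$ to the genie so that the full-duplex feedback inside $\phi_t$ does not obstruct the standard broadcast-channel DoF argument. The remaining two members, $\max(N_1,N_2)$ and $\max(A,B)$ with $A=M_1+\frac{N_2(N_1-M_1)}{N_1}$ and $B=M_2+\frac{N_1(N_2-M_2)}{N_2}$, I expect to obtain from one genie-aided signal-space dimension-counting argument: letting a chosen subset of the single-antenna downlink users pool their observations, together with $\mathbf y^{[\sf bs]}$, so that after they decode and reconstruct the downlink contribution $\mathbf g_i\mathbf x^{[\sf bs]}$, what remains is the uplink data seen through the inter-user channel $[h_{ij}]$ stacked with the BS observation. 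Balancing the dimensions available at these pooled receivers against the dimensions the aligned uplink interference must occupy yields $\max(N_1,N_2)$ when antennas are abundant and $A$ (respectively $B$) in the interference-limited regime $N_1>M_1$ (respectively $N_2>M_2$); the outer $\max$ appears because only the member matching the active regime is a non-vacuous non-negative bound.

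The hard part will be making the dimension count tight in the presence of interference alignment. A naive bound at each downlink user reads (desired dimensions) $+$ (uplink-interference dimensions) $\le 1$ per channel use, but the uplink-interference dimension is \emph{not} $d^{[\sf u]}_\Sigma$: the uplink users may align their signals at the downlink users into a strictly smaller subspace. The crux of the converse is therefore to quantify exactly how much the $N_2$ uplink streams can be compressed \emph{simultaneously} at all $N_1$ downlink users while staying separable at the $M_2$-antenna BS, and to show this compression cannot beat the claimed facets. I would carry this out by tracking the ranks of the collections $\{\mathbf g_i(t)\}$, $\{h_{ij}(t)\}$, $\{\mathbf f_j(t)\}$ over the $n$-block using the genericity (i.i.d. continuous, time-varying) of the channels, so that any linear dependence exploited for alignment at the downlink necessarily consumes receive dimensions at the BS.

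For achievability I would work over a length-$T$ time extension and exploit the time variation to perform asymptotic interference alignment. In the downlink-favoring scheme the BS transmits essentially $M_1$ downlink streams per channel use by zero-forcing beamforming, choosing its $M_1T$-dimensional transmit signal so that each downlink user's desired stream avoids both the inter-user interference subspace and the other users' downlink streams, while the $N_2$ uplink users pick beamformers in the $T$-dimensional extension space that align their aggregate interference at every downlink user into a common subspace of minimal feasible dimension, delivering the residual $\frac{N_2(N_1-M_1)}{N_1}$ uplink DoF; the uplink-favoring scheme is the mirror image and attains $B$. The main obstacle on this side is to verify that a single set of uplink beamformers aligns the interference at all $N_1$ downlink users at once while leaving the $N_2$ uplink signals generic, hence decodable, at the BS; I would resolve this by the standard asymptotic-alignment construction based on monomial beamformers driven by the time-varying coefficients, together with a counting argument showing the required subspace inclusions are generically satisfiable as $T\to\infty$, with the fractional DoF values realized in the limit.
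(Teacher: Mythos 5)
Your achievability plan is essentially the paper's (asymptotic interference alignment for the uplink beamformers so the aggregate inter-user interference collapses into a common subspace at every downlink user, zero-forcing for the downlink, and an optimization over the stream split), with one small omission: the two ``favoring'' schemes you describe always hand the uplink the full residual fraction $1-M_1/N_1$ of the time-extended space (resp.\ the downlink the residual $1-M_2/N_2$), which violates the BS receive constraint $N_2\lambda_2\le M_2$ when $M_1N_2+M_2N_1<N_1N_2$; in that regime the optimum of the linear program $\max\{N_1\lambda_1+N_2\lambda_2\}$ subject to $\lambda_1+\lambda_2\le1$, $N_1\lambda_1\le M_1$, $N_2\lambda_2\le M_2$ sits at the interior corner $(\lambda_1,\lambda_2)=(M_1/N_1,M_2/N_2)$ giving $M_1+M_2$, so you need that corner too, not only the four you list.

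The genuine gap is in the converse. The bound $d_{\Sigma,1}\le M_1+M_2$ is fine, but for the remaining facets you only name the crux (``quantify how much the uplink streams can be compressed simultaneously at all downlink users while staying separable at the BS'') and propose to settle it by tracking ranks of the channel collections. That is not a proof: a signal-space rank count constrains only linear schemes, whereas the theorem is an information-theoretic converse; and, more fundamentally, a single pooled-receiver dimension count produces an \emph{unweighted} inequality on $d^{[\sf d]}_{\Sigma}+d^{[\sf u]}_{\Sigma}$, which can never yield the facet $M_1+\tfrac{N_2(N_1-M_1)}{N_1}$. That quantity is not itself a directly provable bound; it is the corner of the polytope cut out by $d^{[\sf d]}_{\Sigma}\le\min(M_1,N_1)$, $d^{[\sf u]}_{\Sigma}\le\min(M_2,N_2)$, and the \emph{weighted} inequality $N_2 d^{[\sf d]}_{\Sigma}+N_1 d^{[\sf u]}_{\Sigma}\le N_1N_2$, and it is this weighted inequality your plan does not produce. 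The paper obtains it by first proving that the two-user MIMO Z-IC with output feedback at the encoders and message side information at the decoders has $d_1+d_2\le\max(N_1,N_2)$ (via an enhanced receiver that decodes both messages plus the fact that feedback does not increase the MIMO MAC sum DoF), and then applying that lemma to \emph{every} subset of $\min(N_1,N_2)$ downlink (or uplink) users together with all users of the other type and summing over subsets. Without this subset-averaging step, or an equivalent weighted genie argument, your converse cannot match the achievability in the regime $M_1\le N_1$, $M_2\le N_2$, $M_1N_2+M_2N_1\ge N_1N_2$.
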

\begin{proof}
The achievability proof is given in Section \ref{sec:achievability} and the converse proof is given in Section \ref{sec:converse}.
\end{proof}

We demonstrate the utility of Theorem~\ref{thm:dof_1} by the following example.
\begin{example}[Symmetric FD-BS--HD-user cellular networks] \label{ex:symmetric1}
Consider the $(M,M,N,N)$ FD-BS--HD-user cellular network, i.e., $M_1=M_2=M$ and $N_1=N_2=N$. For this symmetric case, $d_{\Sigma,1}=\min(2M,N)$ from Theorem \ref{thm:dof_1}. On the other hand, if the BS operates in HD mode, we can easily see that the sum DoF is limited by $\min(M,N)$. By comparing the sum DoFs, we can see that there is a two-fold gain by operating the BS in FD mode when we have enough number of users in the network, i.e., $N\ge 2M$. Figure \ref{Ex1} plots $d_{\Sigma,1}$ with respect to $N$ when $M=5$. As shown in the figure, FD operation at the BS improves the sum DoF as $N$ increases and eventually the sum DoF is doubled compared to HD BS for large enough $N$. \hfill$\lozenge$
\end{example}

\begin{figure}[!t]
\centering
\includegraphics[scale=0.7]{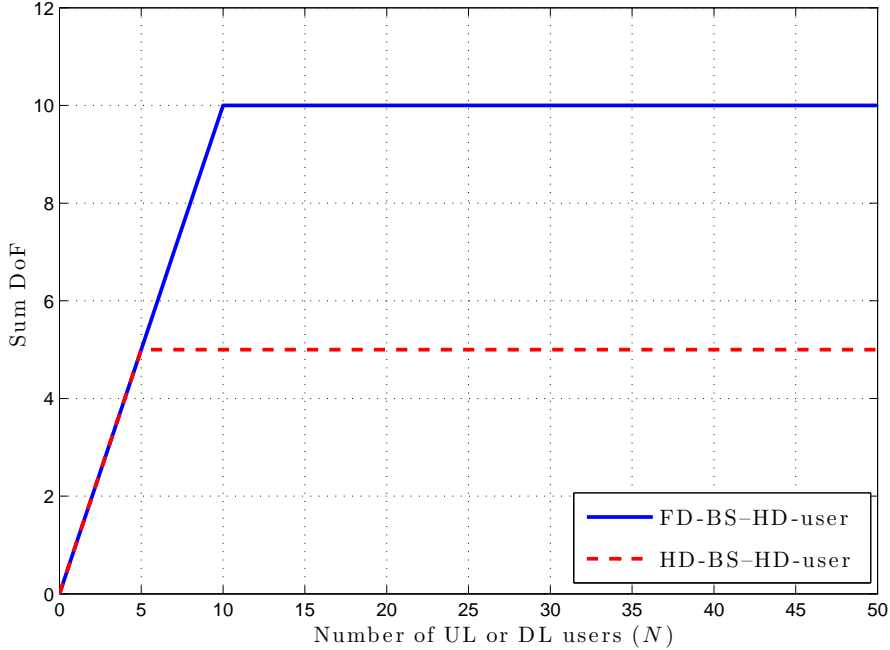}
\caption{Sum DoFs when $M_1=M_2=5$ and $N_1=N_2=N$.} \label{Ex1}
\end{figure}

For the FD-BS--FD-user cellular network, we have the following theorem.

\begin{theorem} \label{thm:dof_2}
For the $(M_1,M_2,N)$ FD-BS--FD-user cellular network,
\begin{align} \label{eq:dof_2}
d_{\Sigma,2}=\min(M_1+M_2,N).
\end{align}
\end{theorem}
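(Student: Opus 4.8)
The plan is to establish the two inequalities $d_{\Sigma,2}\le \min(M_1+M_2,N)$ and $d_{\Sigma,2}\ge \min(M_1+M_2,N)$ separately. A useful first observation is that the target value coincides with the sum DoF of the $(M_1,M_2,N,N)$ FD-BS--HD-user network: specializing Theorem~\ref{thm:dof_1} to $N_1=N_2=N$ gives $\max(N,N)=N$ and $\max\!\big(M_1+\tfrac{N(N-M_1)}{N},\,M_2+\tfrac{N(N-M_2)}{N}\big)=\max(N,N)=N$, so $d_{\Sigma,1}(M_1,M_2,N,N)=\min(M_1+M_2,N)$. This suggests proving achievability by reduction to the already-established achievability of Theorem~\ref{thm:dof_1}, while the converse is handled directly.

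For achievability I would argue that the FD-BS--FD-user network can \emph{simulate} the $(M_1,M_2,N,N)$ FD-BS--HD-user network, so that any rate tuple achievable there (in particular one whose DoF sum is $\min(M_1+M_2,N)$, by Theorem~\ref{thm:dof_1}) is also achievable here. Let FD user $i$ play the roles of both HD downlink user $i$ and HD uplink user $i$: for transmission it emits the uplink codeword $x_i(t)=\varphi_t(W_i^{[\sf u]})$ prescribed by the HD scheme (which uses no receiver feedback), and the BS runs its HD encoder and decoder unchanged, since the BS observation $\mathbf{y}^{[\sf bs]}$ is statistically identical in the two models. For reception, note that HD downlink user $i$ would observe $y_i+h_{ii}x_i$, which differs from the FD observation $y_i$ in \eqref{eq:inout3} only by the self term $h_{ii}x_i$; since FD user $i$ knows its own transmitted signal $x_i$, it can add a locally generated coefficient times $x_i$ to $y_i$ to reconstruct a signal with exactly the HD downlink statistics, and then apply the HD downlink decoder. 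Hence $d_{\Sigma,2}\ge \min(M_1+M_2,N)$.

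For the converse I would prove $d_{\Sigma,2}\le\min(M_1+M_2,N)$ through two bounds. The bound $d_{\Sigma,2}\le M_1+M_2$ is the easy part: all downlink information must pass through the $M_1$-dimensional transmit signal $\mathbf{x}^{[\sf bs]}(t)$, giving $\sum_i d_i^{[\sf d]}\le M_1$, whereas all uplink messages are decoded from the $M_2$-dimensional observation $\mathbf{y}^{[\sf bs]}(t)$, giving $\sum_i d_i^{[\sf u]}\le M_2$; adding these yields the claim. The bound $d_{\Sigma,2}\le N$ is the crux. I would attack it with a genie-aided argument that hands a carefully chosen subset of messages and received signals to an enhanced receiver, reducing the problem to counting the dimensions carried by the $N$ single-antenna user observations $\{y_i^n\}$, which simultaneously convey the desired downlink signals and the interfering uplink signals.

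The main obstacle is precisely this last bound. Since $N$ is strictly smaller than the decoupled value $\min(M_1,N)+\min(M_2,N)$ whenever $N<M_1+M_2$, the bound can only originate from the inter-user interference coupling: the very uplink signals that the BS must resolve are the ones corrupting downlink reception at the users. Consequently any over-generous genie that enables the users to cancel inter-user interference (for instance by revealing all uplink messages together with whatever is needed to reconstruct the uplink codewords) destroys the coupling and proves only the weaker decoupled bound. The argument must therefore retain the coupling while coping with the full-duplex feedback, namely that each $x_i(t)$ and $\mathbf{x}^{[\sf bs]}(t)$ depend causally on past received signals through \eqref{eq:inout4}; ensuring that this feedback inflates neither the effective received dimension nor the $o(\log P)$ terms is where I expect the real difficulty to lie, and I would model the genie construction on the converse of Theorem~\ref{thm:dof_1} specialized to $N_1=N_2=N$.
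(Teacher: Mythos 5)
Your achievability argument and the bound $d_{\Sigma,2}\le M_1+M_2$ are sound and essentially coincide with the paper's: achievability is obtained by exactly the same reduction (run the $(M_1,M_2,N,N)$ FD-BS--HD-user scheme of Theorem~\ref{thm:dof_1}, noting that each FD user sees strictly less interference than the corresponding HD DL user), and the paper gets $M_1+M_2$ by collapsing the $N$ users into a single node to form a MIMO two-way channel, which is equivalent to your cut-set count. The genuine gap is the bound $d_{\Sigma,2}\le N$, which you correctly identify as the crux but do not prove: you only propose to ``model the genie construction on the converse of Theorem~\ref{thm:dof_1}'' and concede that handling the full-duplex feedback is the open difficulty. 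That is precisely where such a reduction breaks. Lemma~\ref{lemma:MIMO_ZIC} and the subset-summing argument of Section~\ref{sec:converse} rely on the UL encoders being feedback-free maps $x_j(t)=\varphi_t(W_j^{[\sf u]})$ and on the DL receivers and UL transmitters being physically distinct nodes. In the FD-user network each user is simultaneously an UL transmitter and a DL receiver, so $x_i(t)$ may depend on $y_i(1),\cdots,y_i(t-1)$; after cooperating the users, the resulting two-user channel acquires output feedback from receiver 1 to transmitter 2 and message side information $W_2$ at receiver 1, a configuration Lemma~\ref{lemma:MIMO_ZIC} does not cover and whose sum-DoF bound of $\max(N_1,N_2)$ would have to be re-derived (the re-encoding step in that lemma's proof already fails once $\mathbf{x}_2$ is not a function of $W_2$ alone).

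The paper closes this gap by a different device. It first splits the FD BS and each FD user into transmit and receive halves (valid because the self-interference links are zero, by \cite[Lemma 1]{Viveck1:09}), then groups the resulting half-nodes into four super-nodes, and invokes the DoF result for four-node X networks with relays, feedback, cooperation and full-duplex operation of \cite{Viveck1:09} to obtain the pairwise bound $d^{[\sf u]}_i+d^{[\sf d]}_j\le 1$ for all $i\neq j$; summing over the $N(N-1)$ ordered pairs yields $d_{\Sigma,2}\le N$. The essential point your proposal misses is that the coupling you want to preserve is captured by a \emph{pairwise} bound on one UL and one DL message at \emph{different} users, not by a single genie-aided super-receiver. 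Without this (or an equivalent argument that genuinely controls the users' feedback), your converse establishes only $d_{\Sigma,2}\le M_1+M_2$, which is strictly weaker than the theorem whenever $N<M_1+M_2$.
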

\begin{proof}
From the network model and the DoF definition in Section \ref{sec:problem}, any achievable sum DoF in the $(M_1,M_2,N,N)$ FD-BS--HD-user cellular network is also achievable for the $(M_1,M_2,N)$ FD-BS--FD-user cellular network. In particular, the encoding functions at the BS are the same for both network models, and the BS also receives the same signal as shown in \eqref{eq:inout1} and \eqref{eq:inout3}. Comparing the user encoders, we can see that the user encoding function for the FD-BS--FD-user cellular network is more general than the encoding function for the FD-BS--HD-user cellular network. Furthermore, we can easily see that the received signal \eqref{eq:inout4} is ``better'' than the received signal for the  FD-BS--HD-user cellular network \eqref{eq:inout3}, in that it has less interference (self-interference is suppressed for the FD user case). Hence, from Theorem \ref{thm:dof_1}, the sum DoF of $\min(M_1+M_2,N)$ is achievable for the $(M_1,M_2,N)$ FD-BS--FD-user cellular network, which coincides with $d_{\Sigma,2}$ in \eqref{eq:dof_2}.
The converse proof is given in Section \ref{sec:converse}.
\end{proof}



We demonstrate the utility of Theorem~\ref{thm:dof_2} by the following example.

\begin{example}[Symmetric FD-BS--FD-user cellular networks] \label{ex:symmetric2}
Consider the $(M,M,N)$ FD-BS--FD-user cellular network, i.e., $M_1=M_2=M$. For this symmetric case, $d_{\Sigma,2}=\min(2M,N)$ from Theorem \ref{thm:dof_2}, which coincides with the sum DoF of the symmetric FD-BS--HD-user cellular network in Example \ref{ex:symmetric1}. Again, if both the BS and the users are limited to operate in HD mode, then the sum DoF is limited by $\min(M,N)$. \hfill$\lozenge$
\end{example}
\medskip

To be fair, the $(M,M,N)$ FD-BS--FD-user cellular network in Example \ref{ex:symmetric2} has been considered in \cite{Sahai13} under the ergodic fading setting assuming that the phase of each channel coefficient in $\{h_{ij}(t)\}_{i,j\in[1:N],i\neq j}$ is drawn independently from a uniform phase distribution. For this case, it has been shown in \cite[Theorem 1]{Sahai13} that the achievable DoF tuple satisfies:
\begin{align} \label{eq:dof_region}
\sum_{i=1}^{N}d^{[{\sf d}]}_i&\leq \min(M,N)\nonumber\\
\sum_{j=1}^{N}d^{[{\sf u}]}_j&\leq \min(M,N)\nonumber\\
\sum_{i=1}^{N}d^{[{\sf d}]}_i+\sum_{j=1}^{N}d^{[{\sf u}]}_j&\leq\min(2M,N),
\end{align}
where \eqref{eq:dof_region} characterises the sum DoF. This result in \cite{Sahai13} is general in that it provides a general achievable {\em DoF region}, while our result in Theorem~\ref{thm:dof_2} generalizes the {\em sum DoF} result in \cite{Sahai13} by considering arbitrary number of transmit and receive antennas at the BS, and also extends to any i.i.d. generic channel setting including the ergodic fading setting.

In Section \ref{sec:discussion}, we discuss in detail regarding the DoF improvement by enabling FD operation, and also the effect of imperfect self-interference suppression.

\section{Achievability} \label{sec:achievability}
In this section, we prove that the sum DoF $d_{\Sigma,1}$ in Theorem \ref{thm:dof_1} is achievable.
To better illustrate the main insight of the coding scheme, we first consider the achievablity of Theorem \ref{thm:dof_1} for the case $N_1=1$ in Section \ref{subsec:case1}. The main component of the scheme utilizes IA via transmit beamforming with a finite symbol extension. For general $N_1$, interference from multiple UL users should be simultaneously aligned at multiple DL users, which requires asymptotic IA, i.e., an arbitrarily large symbol extension.
In Section \ref{subsec:case2}, we introduce transmit beamforming adopting such asymptotic IA for the general network configuration.

\begin{figure}[t!]
\begin{center}
\includegraphics[scale=0.8]{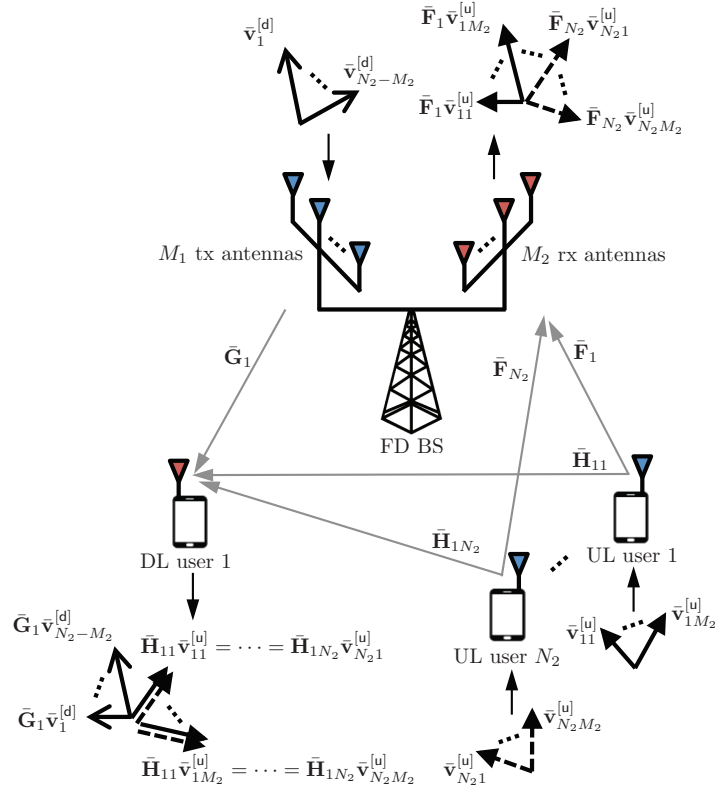}
\end{center}
\vspace{-0.15in}
\caption{Transmit beamforming for the $(M_1,M_2,1,N_2)$ FD-BS--HD-user cellular network when $M_2\leq N_2$.}
\label{fig:scheme1_model1}
\vspace{-0.1in}
\end{figure}

\subsection{The Case $N_1=1$} \label{subsec:case1}
For the $(M_1,M_2,1,N_2)$ FD-BS--HD-user cellular network,
\begin{align} \label{eq:dof_N_1}
d_{\Sigma,1}=\begin{cases}
N_2 &\text{if }M_2\geq N_2,\\
M_2+\frac{N_2-M_2}{N_2} &\text{if }M_2\leq N_2
\end{cases}
\end{align}
from Theorem \ref{thm:dof_1}. For the proof on how \eqref{eq:dof_N_1} can be evaluated from \eqref{eq:dof_1} for the case $N_1=1$, we refer to the proof in Lemma \ref{lemma:sum_DoF_conversion}.
In the following, we show that $d_{\Sigma,1}$ in \eqref{eq:dof_N_1} is achievable by considering two cases, $M_2\ge N_2$ and $M_2\le N_2$. For the first case $M_2\geq N_2$, we can easily achieve $d_{\Sigma,1}=N_2$ by simply utilizing only the UL transmission, i.e., the BS receives from the $N_2$ UL users with $M_2$ receive antennas.
Now consider the second case where $M_2\leq N_2$, which we explain with the help of Figure \ref{fig:scheme1_model1}.

For this case, communication takes place via transmit beamforming over a block of $N_2$ time slots, i.e., $N_2$ symbol extension.
Denote
\begin{align}
\bar{\mathbf{G}}_1&=\operatorname{diag}(\mathbf{g}_1(1),\cdots\mathbf{g}_1(N_2))\in\mathbb{R}^{N_2\times M_1N_2},\nonumber\\
 \bar{\mathbf{H}}_{1j}&=\operatorname{diag}(h_{1j}(1),\cdots,h_{1j}(N_2))\in \mathbb{R}^{N_2\times N_2},
\nonumber\\
\bar{\mathbf{F}}_j&=\operatorname{diag}(\mathbf{f}_j(1),\cdots,\mathbf{f}_j(N_2))\in\mathbb{R}^{M_2N_2\times N_2},
\end{align}
  where $j\in[1:N_2]$.
The BS sends $N_2-M_2$ information symbols to the DL user via the $M_1N_2\times 1$ beamforming vectors $\{\bar{\mathbf{v}}^{[{\sf d}]}_k\}_{k\in[1:N_2-M_2]}$. On the other hand, UL user $j\in[1:N_2]$ sends $M_2$ information symbols to the BS via the $N_2\times 1$ beamforming vectors $\{\bar{\mathbf{v}}^{[{\sf u}]}_{jk}\}_{k\in[1:M_2]}$.

We first construct $\{\bar{\mathbf{v}}^{[{\sf d}]}_k\}_{k\in[1:N_2-M_2]}$ as a set of $N_2-M_2$ linearly independent random vectors.
Next, we construct linearly independent $\{\bar{\mathbf{v}}^{[{\sf u}]}_{jk}\}_{k\in[1:M_2],j\in[1:N_2]}$ such that for each $k\in[1:M_2]$, all the $N_2$ information symbols that are indexed with $k\in[1:M_2]$ are aligned at the DL user, i.e., satisfying the IA condition $\bar{\mathbf{H}}_{11}\bar{\mathbf{v}}^{[{\sf u}]}_{1k}=\cdots=\bar{\mathbf{H}}_{1N_2}\bar{\mathbf{v}}^{[{\sf u}]}_{N_2k}$ for all $k\in[1:M_2]$.
Specifically, we first construct $\{\bar{\mathbf{v}}^{[{\sf u}]}_{1k}\}_{k\in[1:M_2]}$ as a set of $M_2$ linearly independent random vectors. Then, for a given $\{\bar{\mathbf{v}}^{[{\sf u}]}_{1k}\}_{k\in[1:M_2]}$, we construct $\bar{\mathbf{v}}^{[{\sf u}]}_{jk}=(\bar{\mathbf{H}}_{1j})^{-1}\bar{\mathbf{H}}_{11}\bar{\mathbf{v}}^{[{\sf u}]}_{1k}$ for all $k\in[1:M_2],j\in[2:N_2]$. By such construction, the resulting $\{\bar{\mathbf{v}}^{[{\sf u}]}_{jk}\}_{k\in[1:M_2],j\in[1:N_2]}$ are linearly independent almost surely.

We now move on to the decoding step at the DL user. Due to the previous IA procedure of the UL users, the number of dimensions occupied by the inter-user interference signals is given by $M_2$. Furthermore, the DL signals sent by the BS occupy $N_2-M_2$ dimensions and are linearly independent of the inter-user interference signals almost surely. Hence, the DL user is able to decode its intended information symbols achieving one DoF each.
Next, consider decoding at the BS. Since $\{\bar{\mathbf{v}}^{[{\sf u}]}_{jk}\}_{k\in[1:M_2],j\in[1:N_2]}$ are linearly independent, $\{\bar{\mathbf{F}}_{j}\bar{\mathbf{v}}^{[{\sf u}]}_{jk}\}_{k\in[1:M_2],j\in[1:N_2]}$ are also linearly independent almost surely. Hence, the BS is able to decode the $M_2N_2$ information symbols.
Finally, from the fact that a total of $N_2-M_2+M_2N_2$ information symbols are communicated over $N_2$ time slots, $d_{\Sigma,1}=M_2+\frac{N_2-M_2}{N_2}$ is achievable for the case $M_2\leq N_2$.


\begin{figure}[t!]
\begin{center}
\includegraphics[scale=0.8]{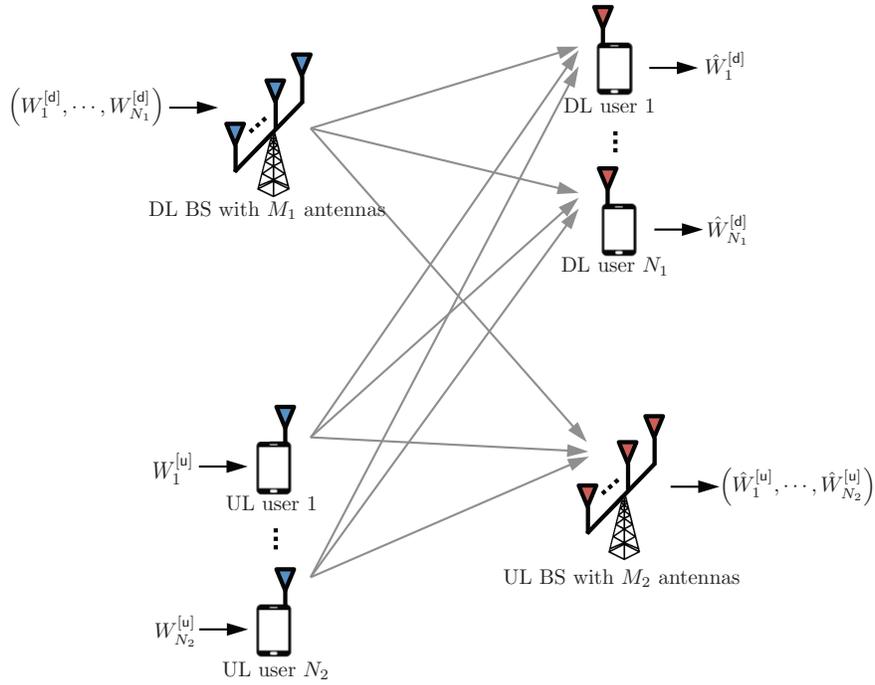}
\end{center}
\vspace{-0.15in}
\caption{Two-cell multiantenna cellular networks in which the first and second cells operate as DL and UL respectively.}
\label{fig:ULDL_model}
\vspace{-0.1in}
\end{figure}

\begin{figure}[t!]
\begin{center}
\includegraphics[scale=0.8]{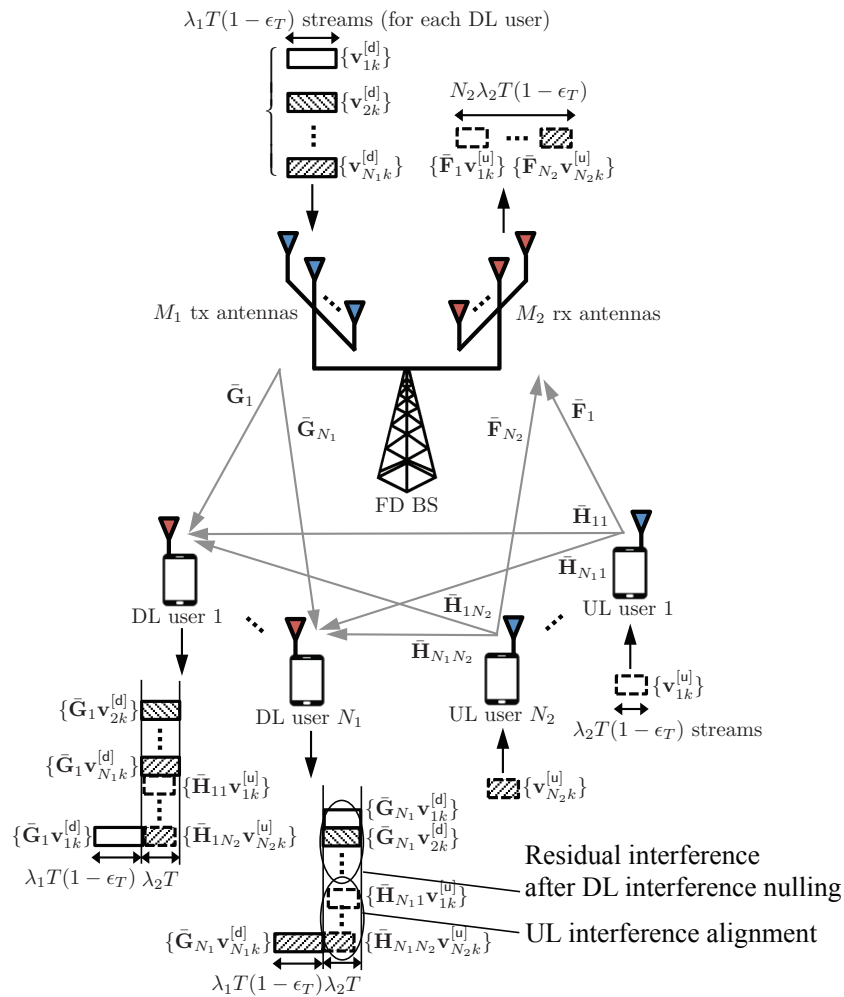}
\end{center}
\vspace{-0.15in}
\caption{Conceptual illustration of transmit beamforming for the $(M_1,M_2,N_1,N_2)$ FD-BS--HD-user cellular network, where for convenience we assume $\lambda_1\geq \lambda_2$ in the figure.}
\label{fig:scheme1_model1}
\vspace{-0.1in}
\end{figure}

\subsection{General Case} \label{subsec:case2}
Following the intuition in the previous subsection, with IA, we would like to confine the interference signals transmitted from multiple UL users into a preserved signal subspace at each DL user, leaving the rest of subspace for the intended signals sent from the BS. For general $N_1$, this requires arbitrarily large number of symbol extensions \cite{Cadambe107}.

For this purpose, a recently developed IA technique in \cite{JeonSuh:14} for the multiantenna UL--DL cellular network can be applied for the $(M_1,M_2,N_1,N_2)$ FD-BS--HD-user cellular network. To show how the scheme in~\cite{JeonSuh:14} fits into our problem, we begin with a brief overview of their network model.
In \cite{JeonSuh:14}, the authors consider a UL--DL cellular network (Figure~\ref{fig:ULDL_model}), where two cells co-exist (each cell consists of one BS and a set of users). In one cell, a BS with $M_1$ antennas transmits to a set of $N_1$ DL users, while in the other cell a set of UL users transmit to a BS with $M_2$ antennas. Thus, the network models the case when it can schedule each cell in DL or UL phase separately. The structural similarity with our FD-BS--HD-user cellular network is apparent, and the key difference between them is that there is no inter-cell interference between the DL BS and UL BS (since in the FD-BS--HD-user cellular network, UL and DL is performed with a single FD BS). Accordingly, the transmit signal vector of the DL BS in the UL--DL model (Figure \ref{fig:ULDL_model}) can also be used as the transmit signal vector of the FD BS in the FD-BS--HD-user cellular network (Figure \ref{fig:model1}), and the transmit signal of each UL user in the UL--DL model (Figure \ref{fig:ULDL_model}) can also be used by each UL user in the FD-BS--HD-user cellular network (Figure \ref{fig:model1}).
Therefore, the IA scheme stated in \cite[Section IV-E]{JeonSuh:14} is applicable to the $(M_1,M_2,N_1,N_2)$ FD-BS--HD-user cellular network.
However, due to the self-interference suppression capability in the FD BS case, the {\em performance} resulting from this scheme will be different for the two networks, and our contribution for achievability lies in the analysis of the sum DoF of the scheme for the FD-BS--HD-user cellular network. 

For completeness and better understanding, we briefly summarize how the IA scheme in \cite[Section IV-E]{JeonSuh:14} can be adapted to the $(M_1,M_2,N_1,N_2)$ FD-BS--HD-user cellular network. We then give the analysis of its achievable sum DoF.

\subsubsection{DL interference nulling and UL interference alignment}
Communication takes place over a block of $T$ time slots, i.e., $T$ symbol extension. Denote
\begin{align}
\bar{\mathbf{G}}_i&=\operatorname{diag}(\mathbf{g}_1(i),\cdots\mathbf{g}_i(T))\in\mathbb{R}^{T\times M_1T},\nonumber\\
 \bar{\mathbf{H}}_{ij}&=\operatorname{diag}(h_{ij}(1),\cdots,h_{ij}(T))\in \mathbb{R}^{T\times T},
\nonumber\\
\bar{\mathbf{F}}_j&=\operatorname{diag}(\mathbf{f}_j(1),\cdots,\mathbf{f}_j(T))\in\mathbb{R}^{M_2T\times T},
\end{align}
for $i\in[1:N_1]$ and $j\in[1:N_2]$.
Each information symbol is transmitted through a length-$T$ time-extended beamforming vector.
Figure \ref{fig:scheme1_model1} is a conceptual illustration for this transmit beamforming.
We refer to \cite[Section IV-E]{JeonSuh:14} for the detailed construction of beamforming vectors.
Suppose that $\lambda_1,\lambda_2\in(0,1]$ and $\epsilon_T\to 0$ as $T$ increases.
For $i\in[1:N_1]$, the BS sends $\lambda_1T(1-\epsilon_T)$ information symbols to DL user $i$ using the set of $T$ time-extended beamforming vectors $\{\mathbf{v}_{ik}^{[{\sf d}]}\}_{k\in[1:\lambda_1T(1-\epsilon_T)]}$.
Similarly, UL user $j$ sends  $\lambda_2T(1-\epsilon_T)$ information symbols to the BS using the set of $T$ time-extended beamforming vectors $\{\mathbf{v}_{jk}^{[{\sf u}]}\}_{k\in[1:\lambda_2T(1-\epsilon_T)]}$, where $j\in[1:N_2]$.

As seen in Figure~\ref{fig:scheme1_model1}, the set of beamforming vectors transmitted from each UL user is set to align its interference at each DL user.
More specifically, by applying asymptotic IA for $\big\{\bar{\mathbf{v}}^{[{\sf u}]}_{jk}\big\}_{j\in[1:N_2],k\in[1:\lambda_2 T(1-\epsilon_T)]}$, we can guarantee that  $\operatorname{span}\left(\big\{ \bar{\mathbf{H}}_{ij}\bar{\mathbf{v}}^{[{\sf u}]}_{jk}\big\}_{j\in[1:N_2],k\in[1:\lambda_2 T(1-\epsilon_T)]}\right)$ occupies at most $\lambda_2 T$ dimensional subspace in $T$ dimensional signal space for all $i\in[1:N_1]$ almost surely in the limit of large $T$, where $\epsilon_T\to 0$ as $T$ increases, see also \cite[Lemma 2]{JeonSuh:14}.
Then the set of beamforming vectors transmitted from the BS is set to null out its interference at each DL user.
More specifically, $\{\bar{\mathbf{v}}^{[{\sf d}]}_{ik}\}_{i\in[1:N_1],k\in[1:\lambda_1T(1-\epsilon_T)]}$ is set to satisfy
$\bar{\mathbf{G}}_i\bar{\mathbf{v}}^{[{\sf d}]}_{jk}\perp\operatorname{span}\left(\big\{\bar{\mathbf{G}}_{i}\bar{\mathbf{v}}^{[{\sf d}]}_{ik'}\big\}_{k'\in[1:\lambda_1 T(1-\epsilon_T)]}\right)$ for all $i,j\in[1:N_1]$ satisfying $i\neq j$ and $k\in[1:\lambda_1 T(1-\epsilon_T)]$, i.e., zero-forcing is performed using $M_1$ transmit antennas.
In order to apply such DL interference nulling,
\begin{align} \label{eq:constraint1}
M_1T-\lambda_1T(1-\epsilon_T)(N_1-1)\geq \lambda_1T(1-\epsilon_T)
\end{align}
should be satisfied.
Again, as seen in Figure~\ref{fig:scheme1_model1}, for reliable decoding at each DL user achieving one DoF for each information symbol,
\begin{align}  \label{eq:constraint2}
\lambda_1T(1-\epsilon_T)+\lambda_2T\leq T
\end{align}
should be satisfied. Similarly, for reliable decoding at the BS achieving one DoF for each information symbol,
\begin{align}  \label{eq:constraint3}
N_2\lambda_2T(1-\epsilon_T)\leq M_2T
\end{align}
should be satisfied.
Therefore, the proposed scheme is able to deliver $(N_1\lambda_1 +N_2\lambda_2)T(1-\epsilon_T)$ information symbols over $T$ time slots under the constraints \eqref{eq:constraint1} to \eqref{eq:constraint3}.
Finally, from the fact that $\epsilon_T\to 0$ as $T$ increases, its achievable sum DoF is represented by the following optimization problem:
\begin{align} \label{eq:maximization}
\max_{\substack{
            \lambda_1+\lambda_2\leq 1\\
            N_1\lambda_1\leq M_1\\
            N_2\lambda_2\leq M_2}}\{N_1\lambda_1+N_2\lambda_2\}.
\end{align}

\subsubsection{Achievable sum DoF}

\begin{figure}[t!]
\begin{center}
\includegraphics[scale=0.8]{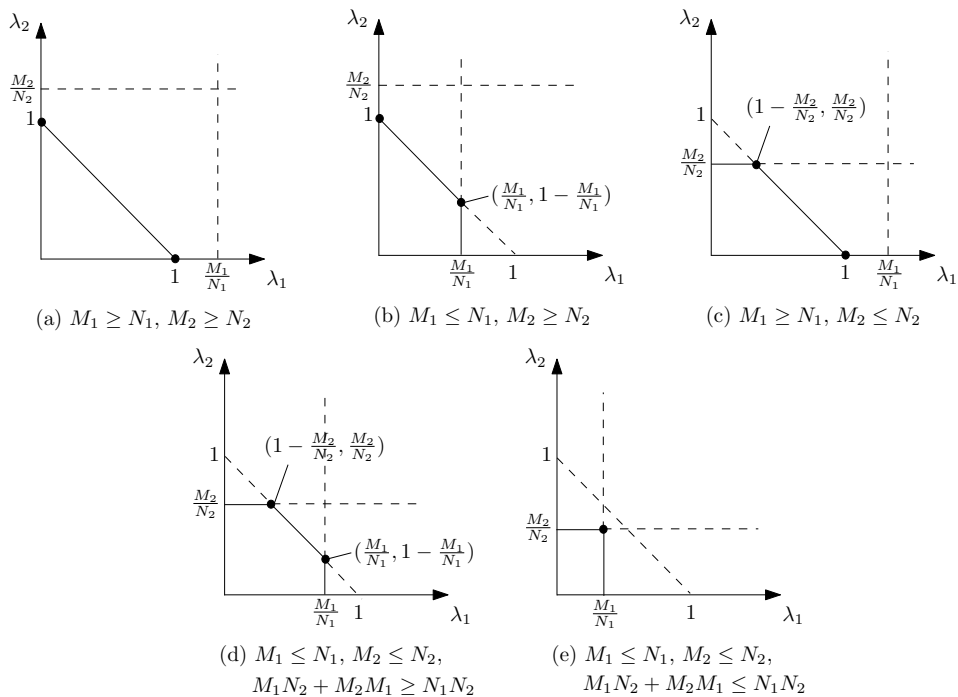}
\end{center}
\vspace{-0.15in}
\caption{Feasible $(\lambda_1,\lambda_2)$ region and the extreme points attaining  the maximum sum DoF.}
\label{fig:feasible}
\vspace{-0.1in}
\end{figure}

In the following, we prove that the sum DoF attained by solving \eqref{eq:maximization} is given as $d_{\Sigma,1}$ stated in Theorem \ref{thm:dof_1}. The linear program in \eqref{eq:maximization} is divided into five cases depending on the feasible region of $(\lambda_1,\lambda_2)$ as depicted in Figure \ref{fig:feasible}.
Obviously, one of the corner points, which are marked as points in Figure \ref{fig:feasible}, provides the maximum sum DoF.
Hence, the maximum sum DoF attained from \eqref{eq:maximization} is given by
\begin{align} \label{eq:achivable_cases}
\begin{cases}
\max(N_1,N_2)&\mbox{if }M_1\geq N_1, M_2\geq N_2,\\
\max\left(N_2,M_1+\frac{N_2(N_1-M_1)}{N_1}\right)&\mbox{if }M_1\leq N_1, M_2\geq N_2,\\
\max\left(N_1,M_2+\frac{N_1(N_2-M_2)}{N_2}\right)&\mbox{if }M_1\geq N_1, M_2\leq N_2,\\
\max\left(M_1+\frac{N_2(N_1-M_1)}{N_1},M_2+\frac{N_1(N_2-M_2)}{N_2}\right)&\mbox{if }M_1\leq N_1, M_2\leq N_2, M_1N_2+M_2N_1\geq N_1N_2,\\
M_1+M_2&\mbox{if }M_1\leq N_1, M_2\leq N_2, M_1N_2+M_2N_1\leq N_1N_2.
\end{cases}
\end{align}
The following lemma then shows that \eqref{eq:achivable_cases} is represented as $d_{\Sigma,1}$ in Theorem \ref{thm:dof_1}, which completes the achievability proof of Theorem \ref{thm:dof_1}.

\begin{lemma} \label{lemma:sum_DoF_conversion}
The sum DoF in \eqref{eq:achivable_cases} is represented as
\begin{align} \label{eq:optimal_dof}
\min\left\{M_1+M_2,\max(N_1,N_2),\max\left(M_1+\frac{N_2(N_1-M_1)}{N_1},M_2+\frac{N_1(N_2-M_2)}{N_2}\right)\right\}.
\end{align}
\end{lemma}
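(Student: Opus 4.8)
The plan is to treat Lemma~\ref{lemma:sum_DoF_conversion} as a pure algebraic identity and verify it region by region over the five domains of \eqref{eq:achivable_cases}, in each region identifying which of the three capping terms in \eqref{eq:optimal_dof} attains the minimum and checking that it equals the listed value. Throughout I abbreviate $A=M_1+M_2$, $B=\max(N_1,N_2)$, $\alpha=M_1+\frac{N_2(N_1-M_1)}{N_1}$, $\beta=M_2+\frac{N_1(N_2-M_2)}{N_2}$, and $C=\max(\alpha,\beta)$, so that the target \eqref{eq:optimal_dof} is simply $\min\{A,B,C\}$. A useful first reduction is a symmetry: the simultaneous swap $(M_1,N_1)\leftrightarrow(M_2,N_2)$ fixes $A$, $B$, and $C$ (it merely interchanges $\alpha$ and $\beta$) and permutes the five cases of \eqref{eq:achivable_cases} among themselves (cases~2 and~3 swap, while cases~1, 4, 5 are fixed). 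Hence it suffices to establish the identity under the assumption $N_1\ge N_2$, so that $B=N_1$; the complementary case is then automatic.

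The crux is a short list of elementary identities that reduce each case check to a one-line sign comparison. Writing $p=M_1/N_1$ and $q=M_2/N_2$, I would verify
\begin{align}
\alpha-A &= N_2(1-p-q), \qquad \beta-A = N_1(1-p-q),\\
\alpha-\beta &= (N_1-N_2)(p+q-1),
\end{align}
together with the comparisons against $B=N_1$, namely $\alpha-N_1=(N_1-N_2)(p-1)$ and $\beta-N_1=-(N_1-N_2)q\le 0$. The first line shows that both $\alpha$ and $\beta$ lie on the same side of $A$, so $C\ge A$ exactly when $p+q\le 1$; since $p+q-1=(M_1N_2+M_2N_1-N_1N_2)/(N_1N_2)$, this single quantity controls precisely the split between cases~4 and~5. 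The comparisons against $B$ show (under $N_1\ge N_2$) that $\beta\le N_1$ always and $\alpha\le N_1\iff M_1\le N_1$, so that $C\le B$ precisely when $M_1\le N_1$, and $C\ge\alpha\ge N_1=B$ when $M_1\ge N_1$.

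With these identities in hand each case falls out immediately. In case~1 ($M_1\ge N_1$, $M_2\ge N_2$) one has $A\ge N_1+N_2\ge B$ and $C\ge\alpha\ge N_1=B$, so the minimum is $B=\max(N_1,N_2)$. In case~5 ($p+q\le 1$) the first identity gives $C\ge A$, while $A=N_1p+N_2q\le N_1(p+q)\le N_1=B$, so the minimum is $A=M_1+M_2$. In case~4 ($p+q\ge 1$) the same identities give $C\le A$ and $C\le B$, so the minimum is $C$. Cases~2 and~3 are similar: in case~2, $q\ge 1$ forces $p+q\ge 1$, hence $\alpha\ge\beta$ so $C=\alpha$, and then $\alpha\le A$ and $\alpha\le N_1$ together with $\alpha\ge N_2$ yield $\min\{A,B,C\}=\alpha=\max(N_2,\alpha)$, exactly the listed value; case~3 is the mirror image under the symmetry above.

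I expect the only genuine obstacle to be bookkeeping, since there are five regions and a naive approach compares three quantities pairwise in each, with further sub-splits on the sign of $N_1-N_2$. The decisive simplification—and the step I would secure first—is the observation that all the relevant differences collapse onto the single linear quantity $p+q-1$ (for the $A$-versus-$C$ and $\alpha$-versus-$\beta$ comparisons) and onto $M_1-N_1$ (for the $C$-versus-$B$ comparison after the symmetry reduction). Once these identities are recorded, the remaining case analysis is entirely mechanical.
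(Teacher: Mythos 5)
Your overall strategy is the same as the paper's: a direct case-by-case verification of the identity over the five regions of \eqref{eq:achivable_cases}, deciding in each region which of the three capping terms attains the minimum. All of your difference identities check out ($\alpha-A=N_2(1-p-q)$, $\beta-A=N_1(1-p-q)$, $\alpha-\beta=(N_1-N_2)(p+q-1)$, $\alpha-N_1=(N_1-N_2)(p-1)$, $\beta-N_1=-(N_1-N_2)q$), the swap $(M_1,N_1)\leftrightarrow(M_2,N_2)$ does fix $A$, $B$, $C$ and the piecewise function in \eqref{eq:achivable_cases}, and your arguments for cases 1, 2, 4, 5 under $N_1\ge N_2$ are sound. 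The one genuine slip is case 3: having already restricted to $N_1\ge N_2$, you cannot dispatch case 3 as ``the mirror image under the symmetry,'' because the swap sends a case-3 point with $N_1\ge N_2$ to a case-2 point with $N_1\le N_2$ --- exactly the half of case 2 you declared ``automatic'' and never verified directly --- so as written the argument for case 3 is circular. The repair is a one-liner with your own identities: for $M_1\ge N_1$, $M_2\le N_2$, $N_1\ge N_2$ one has $\alpha\ge N_1$ (hence $C\ge B$), $A\ge M_1\ge N_1=B$, and $\beta\le N_1$, so $\min\{A,B,C\}=N_1=\max(N_1,\beta)$, matching the listed value. Beyond that, your normalization $p=M_1/N_1$, $q=M_2/N_2$, which collapses the $A$-versus-$C$ and $\alpha$-versus-$\beta$ comparisons onto the single sign of $p+q-1$ (equivalently of $M_1N_2+M_2N_1-N_1N_2$), is a tidier bookkeeping device than the paper's term-by-term manipulations, but the two proofs are the same in substance.
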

\begin{proof}
For notational simplicity, denote
\begin{align}
a_1&=M_1+\frac{N_2(N_1-M_1)}{N_1}=N_2+\frac{M_1(N_1-N_2)}{N_1},\nonumber\\
a_2&=M_2+\frac{N_1(N_2-M_2)}{N_2}=N_1+\frac{M_2(N_2-N_1)}{N_2}.
\end{align}
Then denote $a_3=\min\left\{M_1+M_2,\max(N_1,N_2),\max( a_1,a_2)\right\}$.
In the following, we show that for each of the five cases in \eqref{eq:achivable_cases}, $a_3$ is represented as in the corresponding DoF expression in \eqref{eq:achivable_cases}.
\begin{itemize}
\item Case I ($M_1\geq N_1, M_2\geq N_2$): Obviously, $M_1+M_2\geq \max(N_1,N_2)$. For $N_1\geq N_2$, $\max(a_1,a_2)\geq a_1\geq N_2+\frac{N_1(N_1-N_2)}{N_1}=N_1$.
For $N_1\leq N_2$, $\max(a_1,a_2)\geq a_2\geq N_1+\frac{N_2(N_2-N_1)}{N_2}=N_2$.
Hence $\max(a_1,a_2)\geq \max(N_1,N_2)$.
In conclusion, $a_3=\max(N_1,N_2)$ for Case I.
\item Case II ($M_1\leq N_1, M_2\geq N_2$):
First consider the case where $N_1\geq N_2$.
Then $M_1+M_2\geq M_1+N_2\geq M_1+\frac{N_2(N_1-M_1)}{N_1}=a_1$.
Also $\max(N_1,N_2)=N_1=N_2+\frac{N_1(N_1-N_2)}{N_1}\geq N_2+\frac{M_1(N_1-N_2)}{N_1}=a_1$.
Since $a_2\leq N_1+\frac{N_2(N_2-N_1)}{N_2}=N_2$, $\max(a_1,a_2)=a_1$.
Hence $a_3=a_1$.
Next consider the case where $N_1\leq N_2$.
Then $M_1+M_2\geq N_2$ and $\max(N_1,N_2)=N_2$.
Also $\max(a_1,a_2)\geq a_2\geq N_1+\frac{N_2(N_2-N_1)}{N_2}=N_2$.
Hence $a_3=N_2$.
Finally, from the relation that $a_1\geq N_2$ for $N_1\geq N_2$ and $a_1\leq N_2$ for $N_1\leq N_2$, $a_3=\max(N_2,a_1)$ for Case II.
\item  Case III ($M_1\geq N_1, M_2\leq N_2$): From the symmetric relation with Case II,  $a_3=\max(N_1,a_2)$ for Case III.
\item Case IV ($M_1\leq N_1, M_2\leq N_2, M_1N_2+M_2N_1\geq N_1N_2$):
The condition $M_1N_2+M_2N_1\geq N_1N_2$ means that $M_1\geq \frac{N_1(N_2-M_2)}{N_2}$ and $M_2\geq \frac{N_2(N_1-M_1)}{N_1}$. Hence $M_1+M_2\geq \frac{N_1(N_2-M_2)}{N_2}+M_2=a_2$ and $M_1+M_2\geq M_1+ \frac{N_2(N_1-M_1)}{N_1}=a_1$, which show $M_1+M_2\geq \max(a_1,a_2)$.
For $N_1\geq N_2$, $N_1= N_2+\frac{N_1(N_1-N_2)}{N_1}\geq N_2+\frac{M_1(N_1-N_2)}{N_1}=a_1$ and $N_1\geq a_2$.
Similarly, $N_2\geq a_2$ and $N_2\geq a_1$ for $N_1\leq N_2$.
Hence $\max(N_1,N_2)\geq \max(a_1,a_2)$.
In conclusion, $a_3=\max(a_1,a_2)$ for Case IV.
\item Case V ($M_1\leq N_1, M_2\leq N_2, M_1N_2+M_2N_1\leq N_1N_2$):
For $N_1\geq N_2$, $N_2(M_1+M_2)\leq M_1N_2+M_2N_1\leq N_1N_2$ and then $M_1+M_2\leq N_1$.
Similarly, $M_1+M_2\leq N_2$ for  $N_1\leq N_2$.
Hence $\max(N_1,N_2)\geq M_1+M_2$.
The condition $M_1N_2+M_2N_1\leq N_1N_2$ means that $N_1\geq \frac{N_2M_1}{N_2-M_2}$ and $N_2\geq \frac{N_1M_2}{N_1-M_1}$.
Then $a_1=M_1+\frac{N_2(N_1-M_1)}{N_1}\geq M_1+\frac{N_1M_2}{N_1-M_1}\frac{N_1-M_1}{N_1}=M_1+M_2$ and $a_2=M_2+\frac{N_1(N_2-M_2)}{N_2}\geq M_2+\frac{N_2M_1}{N_2-M_2}\frac{N_2-M_2}{N_2}=M_1+M_2$.
Hence $\max(a_1,a_2)\geq M_1+M_2$.
In conclusion, $a_3=M_1+M_2$ for Case V.
\end{itemize}

In conclusion, $a_3$ is represented as the corresponding sum DoF in \eqref{eq:achivable_cases} for all five cases, which completes the proof.
\end{proof}

\section{Converse} \label{sec:converse}
In this section, we prove the converse of Theorems \ref{thm:dof_1} and \ref{thm:dof_2}.
Recall the encoding and decoding functions of the FD BS and each FD user in Section \ref{subsec:dof_definition}.
The key observation is that the received signals available for encoding the DL messages at the FD BS and the DL messages available for decoding the UL messages at the FD BS cannot increase the sum DoF.
Similarly, the received signals available for encoding its UL message at each FD user and its UL message available for decoding its DL message at each FD user cannot increase the sum DoF.

\subsection{Converse of Theorem \ref{thm:dof_1}}

To prove the converse of Theorem \ref{thm:dof_1}, we introduce the two-user MIMO Z-IC with output feedback for encoding and message side information for decoding depicted in Figure \ref{fig:MIMO_ZIC}.
The received signal vectors of receivers 1 and 2 at time $t$ are respectively given by
\begin{align}
\mathbf{y}_1(t)=&\mathbf{H}_{11}\mathbf{x}_1(t)+\mathbf{H}_{12}\mathbf{x}_2(t)+\mathbf{z}_1(t),\nonumber\\
\mathbf{y}_2(t)=&\mathbf{H}_{22}\mathbf{x}_2(t)+\mathbf{z}_2(t),
\end{align}
where $\mathbf{H}_{11}\in \mathbb{R}^{N_1\times M_1}$, $\mathbf{H}_{12}\in \mathbb{R}^{N_1\times N_2}$, and $\mathbf{H}_{22}\in \mathbb{R}^{M_2\times N_2}$ denote the channel matrices from transmitter 1 to receiver 1, from transmitter 2 to receiver 1, and from transmitter 2 to receiver 2, respectively. The rest of the assumptions are the same as those of the FD-BS--HD-user cellular network in Section \ref{sebsec:FD_BS_HD_user}.
Obviously, the capacity of the two-user MIMO Z-IC is an outer bound on the capacity of the $(M_1,M_2,N_1,N_2)$ FD-BS--HD-user cellular network, since it corresponds to the FD-BS--HD-user cellular network with full cooperation among the DL users and among the UL users.

\begin{figure}[t!]
\begin{center}
\includegraphics[scale=0.8]{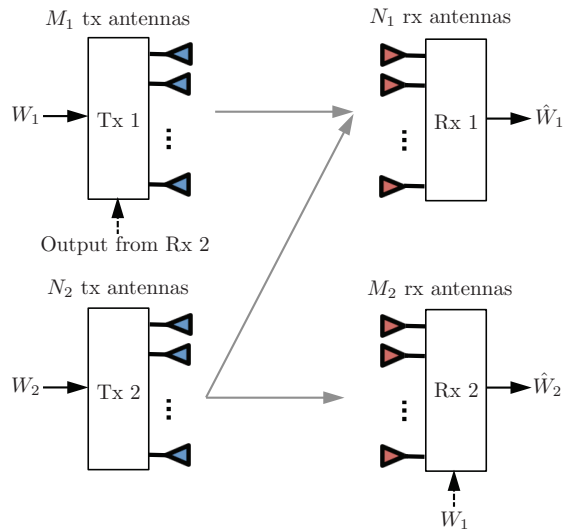}
\end{center}
\vspace{-0.15in}
\caption{Two-user MIMO Z-IC with output feedback for encoding and message side information for decoding.}
\label{fig:MIMO_ZIC}
\vspace{-0.1in}
\end{figure}

\begin{lemma}\label{lemma:MIMO_ZIC}
Consider the two-user MIMO Z-IC with output feedback for encoding and message side information for decoding in Figure \ref{fig:MIMO_ZIC}. Then the DoF region is given by the set of all DoF pairs ($d_1, d_2$) satisfying
\begin{align}
d_i&\leq \min(M_i,N_i),{~~} i=1,2\label{eq:upper1}\\
d_1+d_2&\leq \max(N_1,N_2).\label{eq:upper2}
\end{align}
\end{lemma}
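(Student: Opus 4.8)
The plan is to establish the two families of inequalities as outer bounds; achievability of every DoF pair in the region then follows from standard linear precoding, so I focus on the converse, which is exactly what the outer bound on the cellular network requires. Two preliminary reductions make the bookkeeping clean. First, I would argue that the output feedback to transmitter~1 does not enlarge the DoF region, using the standard fact that output feedback increases the capacity of Gaussian (interference) channels by at most a constant number of bits per channel use and hence leaves the prelog, i.e.\ the DoF, unchanged. This permits the entropy estimates to be carried out as if $\mathbf{x}_1^n$ were a function of $W_1$ alone and $\mathbf{x}_2^n$ a function of $W_2$ alone. Second, the message side information $W_1$ at receiver~2 is used directly through Fano's inequality, giving $nR_1\le I(W_1;\mathbf{y}_1^n)+n\epsilon_n$ and $nR_2\le I(W_2;\mathbf{y}_2^n\mid W_1)+n\epsilon_n$.

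The individual bounds are then immediate. Since $\mathbf{y}_1^n$ has $N_1$ coordinates per symbol, $I(W_1;\mathbf{y}_1^n)\le \tfrac{nN_1}{2}\log P+O(n)$, while giving receiver~1 the genie $W_2$ (hence $\mathbf{x}_2^n$) removes the interference and reduces the first link to a point-to-point MIMO channel of rank $\min(M_1,N_1)$, yielding $d_1\le\min(M_1,N_1)$. Likewise $I(W_2;\mathbf{y}_2^n\mid W_1)=h(\mathbf{y}_2^n)-h(\mathbf{z}_2^n)\le \tfrac{n\min(M_2,N_2)}{2}\log P+O(n)$ because $\mathbf{H}_{22}$ has rank $\min(M_2,N_2)$, giving $d_2\le\min(M_2,N_2)$.

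For the sum bound I would first treat $N_1\ge N_2$ and show $d_1+d_2\le N_1$. Adding the two Fano bounds and using that $\mathbf{x}_1^n$ is known given $W_1$, I get $n(R_1+R_2)-n\epsilon_n \le h(\mathbf{y}_1^n) - h(\mathbf{H}_{12}\mathbf{x}_2^n+\mathbf{z}_1^n) + h(\mathbf{H}_{22}\mathbf{x}_2^n+\mathbf{z}_2^n) - h(\mathbf{z}_2^n)$. The crux is the middle difference $h(\mathbf{H}_{22}\mathbf{x}_2^n+\mathbf{z}_2^n)-h(\mathbf{H}_{12}\mathbf{x}_2^n+\mathbf{z}_1^n)$: because $N_1\ge N_2$, the matrix $\mathbf{H}_{12}\in\mathbb{R}^{N_1\times N_2}$ has full column rank almost surely, so $\mathbf{H}_{12}\mathbf{x}_2^n+\mathbf{z}_1^n$ determines $\mathbf{x}_2^n$, and therefore $\mathbf{H}_{22}\mathbf{x}_2^n+\mathbf{z}_2^n$, up to a bounded-entropy noise term (obtained by applying a left inverse of $\mathbf{H}_{12}$). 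Hence this difference is $O(n)$; combined with $h(\mathbf{y}_1^n)\le \tfrac{nN_1}{2}\log P+O(n)$ and $h(\mathbf{z}_2^n)=O(n)$, dividing by $\tfrac12\log P$ gives $d_1+d_2\le N_1$.

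The remaining case $N_1\le N_2$ I would reduce to the previous one by channel enhancement: adding $N_2-N_1$ generic receive antennas to receiver~1 can only enlarge the DoF region (the enhanced receiver can discard the extra antennas), and the enhanced network has $N_1'=N_2\ge N_2$ with $\mathbf{H}_{12}'$ square and invertible almost surely, so the case-$1$ bound yields $d_1+d_2\le N_1'=N_2$; the original sum DoF is no larger. Together the cases give $d_1+d_2\le\max(N_1,N_2)$. Achievability matches this: the corner points $d_i=\min(M_i,N_i)$ are met by zero-forcing, and along the active sum constraint transmitter~2 confines its signal to the $(N_2-N_1)^+$-dimensional null space of $\mathbf{H}_{12}$ at receiver~1. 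The step I expect to be most delicate is the feedback reduction, since $\mathbf{x}_1^n$ nominally depends on $\mathbf{y}_2^n$ and hence on both $\mathbf{x}_2^n$ and $\mathbf{z}_2^n$; I must confirm that the feedback contributes only $O(n)$ to each mutual-information term so that $h(\mathbf{y}_1^n\mid W_1)$ may legitimately be replaced by $h(\mathbf{H}_{12}\mathbf{x}_2^n+\mathbf{z}_1^n)$ up to $O(n)$. The interference-reconstruction entropy estimate for $N_1\ge N_2$ is the other place where genericity of the channel must be invoked with care.
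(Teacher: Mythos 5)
Your entropy chain for the sum bound rests on the identity $h(\mathbf{y}_1^n\mid W_1)=h(\mathbf{H}_{12}\mathbf{x}_2^n+\mathbf{z}_1^n)$, which requires $\mathbf{x}_1^n$ to be a function of $W_1$ alone, and the reduction you offer to justify this is a genuine gap. The ``standard fact'' that output feedback changes the capacity of a Gaussian interference channel by at most a constant number of bits is false: Suh and Tse showed that the feedback gain of the two-user Gaussian IC is unbounded. (What is true is that feedback does not change the DoF of certain interference networks, but that is itself a theorem requiring proof, not a corollary of a bounded gap, and here it would have to be established for a Z-IC with message side information at receiver~2.) With the feedback retained, $\mathbf{x}_1(t)$ depends on $\mathbf{y}_2^{t-1}$ and hence on both $W_2$ and $\mathbf{z}_2^{t-1}$, so $h(\mathbf{y}_1^n\mid W_1)$ is not $h(\mathbf{H}_{12}\mathbf{x}_2^n+\mathbf{z}_1^n)$ and cannot obviously be replaced by it up to $O(n)$; the cancellation at the heart of your sum bound therefore does not go through as written. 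The rest of your outline --- the individual bounds, the estimate $h(\mathbf{H}_{22}\mathbf{x}_2^n+\mathbf{z}_2^n)-h(\mathbf{H}_{12}\mathbf{x}_2^n+\mathbf{z}_1^n)=O(n)$ via a left inverse of $\mathbf{H}_{12}$ when $N_1\ge N_2$, and the antenna-enhancement reduction of the case $N_1<N_2$ --- is sound and close in spirit to what the paper does.

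The paper avoids the feedback issue by ordering the steps differently: it first enhances receiver~1 to $\max(N_1,N_2)$ antennas and argues that this receiver can decode \emph{both} messages (its observation of $\mathbf{x}_2$ is then at least as good as receiver~2's, by the same genericity you invoke), which turns the system into a two-user MIMO MAC with feedback to the transmitters. The MAC sum rate is then bounded by $I(W_1,W_2;\mathbf{y}_1^n)\le h(\mathbf{y}_1^n)-h(\mathbf{y}_1^n\mid W_1,W_2,\mathbf{z}_2^n)=h(\mathbf{y}_1^n)-h(\mathbf{z}_1^n)$, a bound that is indifferent to feedback because it conditions on both messages (and the feedback noise) simultaneously rather than on $W_1$ alone. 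If you replace your split $I(W_1;\mathbf{y}_1^n)+I(W_2;\mathbf{y}_2^n\mid W_1)$ by this single mutual-information term at the enhanced receiver, your argument closes; as it stands, the feedback reduction is missing.
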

\begin{proof}
The achievability immediately follows from that in \cite[Theorem 1]{Ke:12}, which corresponds to the two-user MIMO Z-IC without output feedback for encoding and message side information for decoding.
Next, we show the converse. Obviously $d_1\leq \min(M_1,N_1)$ and also $d_2\leq \min(M_2,N_2)$ since side information of $W_1$ at receiver 2 cannot increase the DoF more than $\min(M_2,N_2)$, which gives \eqref{eq:upper1}.
Now substitute $N_1$ antennas with $\max(N_1,N_2)$ antennas at receiver 1.
Assume that both receivers are able to recover $W_1$ and $W_2$ respectively with arbitrarily small probabilities of error.
Then, after subtracting $\mathbf{x}_1$ from $\mathbf{y}_1$ ($\mathbf{x}_1$ is obtained from re-encoding $W_1$), receiver 1 constructs $\mathbf{y}'_1=\mathbf{H}'_{12}\mathbf{x}_2+\mathbf{z}_1$, where $\mathbf{H}'_{12}\in\mathbb{R}^{\max(N_1,N_2)\times N_2}$.
Since receiver 2 recovers $W_2$ from $\mathbf{y}_2=\mathbf{H}_{22}\mathbf{x}_2+\mathbf{z}_2$, where $\mathbf{H}_{22}\in\mathbb{R}^{M_2\times N_2}$, receiver 1 can also recover $W_2$ from $\mathbf{y}_1'$ from the fact that $\min(N_2, \max(N_1,N_2))\geq \min(N_2,M_2)$.
As a result, receiver 1 is able to decode both $W_1$ and $W_2$ with $\max(N_1,N_2)$ antennas. Because output feedback cannot increase the sum DoF of the MIMO multiple-access channel (MAC), $d_1+d_2\leq \max(N_1,N_2)$, which provides \eqref{eq:upper2}.
In conclusion, Lemma \ref{lemma:MIMO_ZIC} holds.
\end{proof}

Since the sum DoF of the $(M_1,M_2,N_1,N_2)$ FD-BS--HD-user cellular network is upper bounded by the sum DoF of the two-user MIMO Z-IC, $d_{\Sigma,1}\leq \min(M_1+M_2, \max(N_1,N_2))$ from Lemma \ref{lemma:MIMO_ZIC}, which is yet not enough to show the converse.
In a more refined way of applying Lemma \ref{lemma:MIMO_ZIC}, we prove the converse of Theorem \ref{thm:dof_1} in the following.

Denote $d^{[\sf d]}_{\Sigma}=\sum_{i=1}^{N_1}d^{[\sf d]}_i$ and $d^{[\sf u]}_{\Sigma}=\sum_{j=1}^{N_2}d^{[\sf u]}_j$.
First consider the case where $N_1\geq N_2$.
For this case, choose a subset of DL users in $\mathcal{A}^{[\sf d]}\in[1:N_1]$ satisfying $\operatorname{card}(\mathcal{A}^{[\sf d]})=N_2$.
Then, by applying Lemma \ref{lemma:MIMO_ZIC} only for the DL users in $\mathcal{A}^{[\sf d]}$ (and for the entire UL users), we have
\begin{align} \label{eq:upper_refined1}
\sum_{i\in\mathcal{A}^{[\sf d]}}d^{[\sf d]}_i+d^{[\sf u]}_{\Sigma}\leq N_2.
\end{align}
By summing \eqref{eq:upper_refined1} over all possible $\mathcal{A}^{[\sf d]}$ satisfying $\operatorname{card}(\mathcal{A}^{[\sf d]})=N_2$, we have
\begin{align} \label{eq:upper_refined2}
N_2d^{[\sf d]}_{\Sigma}+N_1d^{[\sf u]}_{\Sigma}\leq N_1N_2.
\end{align}
Therefore,
\begin{align} \label{eq:maximization_upper1}
d_{\Sigma,1}\leq\max_{\substack{
            d^{[\sf d]}_{\Sigma}\leq \min(M_1,N_1)\\
            d^{[\sf u]}_{\Sigma}\leq \min(M_2,N_2)\\
            N_2d^{[\sf d]}_{\Sigma}+N_1d^{[\sf u]}_{\Sigma}\leq N_1N_2}}\{d^{[\sf d]}_{\Sigma}+d^{[\sf u]}_{\Sigma}\}.
\end{align}

Now consider the case where $N_1\leq N_2$.
For this case, choose a subset of UL users in $\mathcal{A}^{[\sf u]}\in[1:N_2]$ satisfying $\operatorname{card}(\mathcal{A}^{[\sf u]})=N_1$.
Then applying Lemma \ref{lemma:MIMO_ZIC} for all possible $\mathcal{A}^{[\sf u]}$ satisfying $\operatorname{card}(\mathcal{A}^{[\sf u]})=N_1$ and summing them provides the same upper bound in \eqref{eq:upper_refined2}.
As a result, \eqref{eq:maximization_upper1} also holds for $N_1\leq N_2$.

By solving the linear program \eqref{eq:maximization_upper1} in a similar manner as in Section \ref{sec:achievability}, we have
\begin{align} \label{eq:upper_cases}
d_{\Sigma,1}\leq\begin{cases}
\max(N_1,N_2)&\mbox{if }M_1\geq N_1, M_2\geq N_2,\\
\max\left(N_2,M_1+\frac{N_2(N_1-M_1)}{N_1}\right)&\mbox{if }M_1\leq N_1, M_2\geq N_2,\\
\max\left(N_1,M_2+\frac{N_1(N_2-M_2)}{N_2}\right)&\mbox{if }M_1\geq N_1, M_2\leq N_2,\\
\max\left(M_1+\frac{N_2(N_1-M_1)}{N_1},M_2+\frac{N_1(N_2-M_2)}{N_2}\right)&\mbox{if }M_1\leq N_1, M_2\leq N_2, M_1N_2+M_2N_1\geq N_1N_2,\\
M_1+M_2&\mbox{if }M_1\leq N_1, M_2\leq N_2, M_1N_2+M_2N_1\leq N_1N_2.
\end{cases}
\end{align}
Note that the upper bound in \eqref{eq:upper_cases} is exactly the same as in \eqref{eq:achivable_cases}.
Therefore, from Lemma \ref{lemma:sum_DoF_conversion},
\begin{align}
d_{\Sigma,1}\leq\min\left\{M_1+M_2,\max(N_1,N_2),\max\left(M_1+\frac{N_2(N_1-M_1)}{N_1},M_2+\frac{N_1(N_2-M_2)}{N_2}\right)\right\},
\end{align}
which completes the converse proof of Theorem \ref{thm:dof_1}.

\subsection{Converse of Theorem \ref{thm:dof_2}}

In this subsection, we prove the converse of Theorem \ref{thm:dof_2}.
We first show that $d_{\Sigma,2}\leq{M_1+M_2}$ in Section \ref{subsec:two_way_bound} and then show that $d_{\Sigma,2}\leq N$ in Section \ref{subsec:x_network_bound}.
Combining the above two bounds, we have the desired bound $d_{\Sigma,2}\leq \min(M_1+M_2,N)$, which completes the converse proof.

\subsubsection{MIMO two-way network upper bound} \label{subsec:two_way_bound}
By allowing full cooperation among the $N$ users in the $(M_1,M_2,N)$ FD-BS--FD-user cellular network, we obtain a MIMO two-way network depicted in Figure~\ref{Upperbound_Twoway}. Clearly, the considered MIMO two-way network provides an upper bound on $d_{\Sigma,2}$.
Therefore, from the result in~\cite{Han:84}, we have
\begin{align}
d_{\Sigma,2}&\leq\min(M_1,N)+\min(M_2,N)\nonumber\\
&\leq M_1+M_2.
\end{align}

\begin{figure}[!t]
\centering
\includegraphics[scale=0.8]{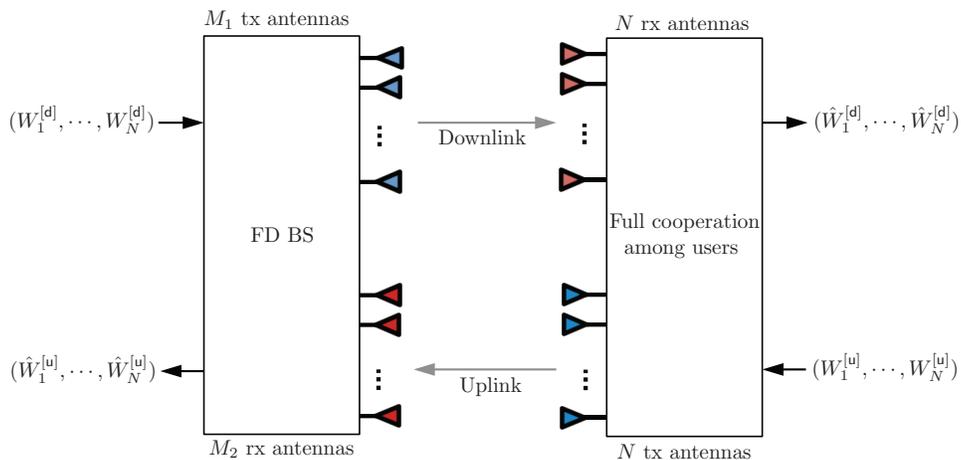}
\caption{MIMO two-way channel by allowing full cooperation among the $N$ users.} \label{Upperbound_Twoway}
\end{figure}

\subsubsection{Four-node X network upper bound} \label{subsec:x_network_bound}
We now prove $d_{\Sigma,2}\leq N$ by using the result of four-node X networks in~\cite{Viveck1:09}.
In order to apply the result in~\cite{Viveck1:09}, we convert the original $(M_1,M_2,N)$ FD-BS--FD-user cellular network into the
corresponding four-node X network as follows:

\begin{itemize}

\begin{figure}[!t]
\centering
\includegraphics[scale=0.8]{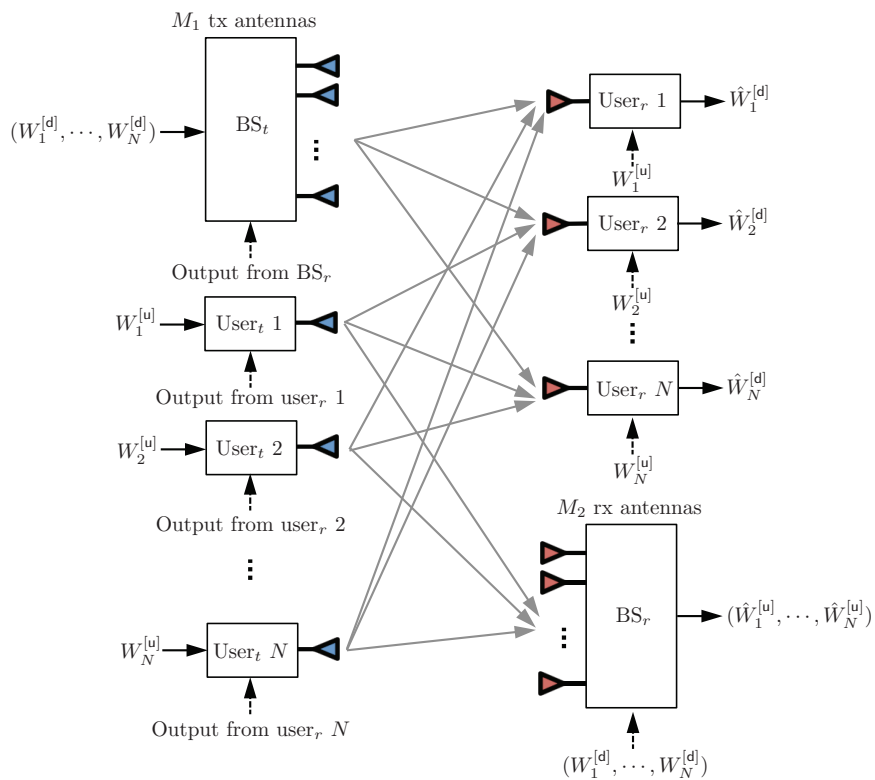}
\caption{Step 1: The equivalent two-cell network with output feedback at the encoders, and message side information at the decoders.} \label{Upperbound_xnetwork_step1}
\vspace{-0.2in}
\end{figure}

\item {\bf Step 1}:
We first transform the $(M_1, M_2, N)$ FD-BS--FD-user cellular network into the equivalent two-cell cellular network consisting of one DL cell and one UL cell depicted in Figure~\ref{Upperbound_xnetwork_step1}.
Specifically, the FD BS is decomposed into the BS$_t$ and the BS$_r$ and FD user $i$ is decomposed into user$_t~i$ and user$_r~i$, where $i\in[1:N]$.
There exists output feedback from the BS$_r$ to the BS$_t$ and from user$_r~i$ to user$_t~i$ for all $i\in[1:N]$, which can be used as side information for encoding.
In addition, $(W^{[{\sf d}]}_1\cdots,W^{[{\sf d}]}_N)$ is available at the BS$_r$ and $W^{[{\sf u}]}_i$ is available at user$_r~i$ for all $i\in[1:N]$, which can be used as side information for decoding.
We refer to the encoding and decoding functions in Section \ref{sebsec:dof_FD_BS_FD_user}.
The channel coefficients from the BS$_t$ to the BS$_r$ and from user$_t~i$ to user$_r~i$ are set to zeros due to perfect self-interference suppression in the original network.
The validity of this transformation is also proved by \cite[Lemma 1]{Viveck1:09}.

\begin{figure}[!t]
\centering
\includegraphics[scale=0.75]{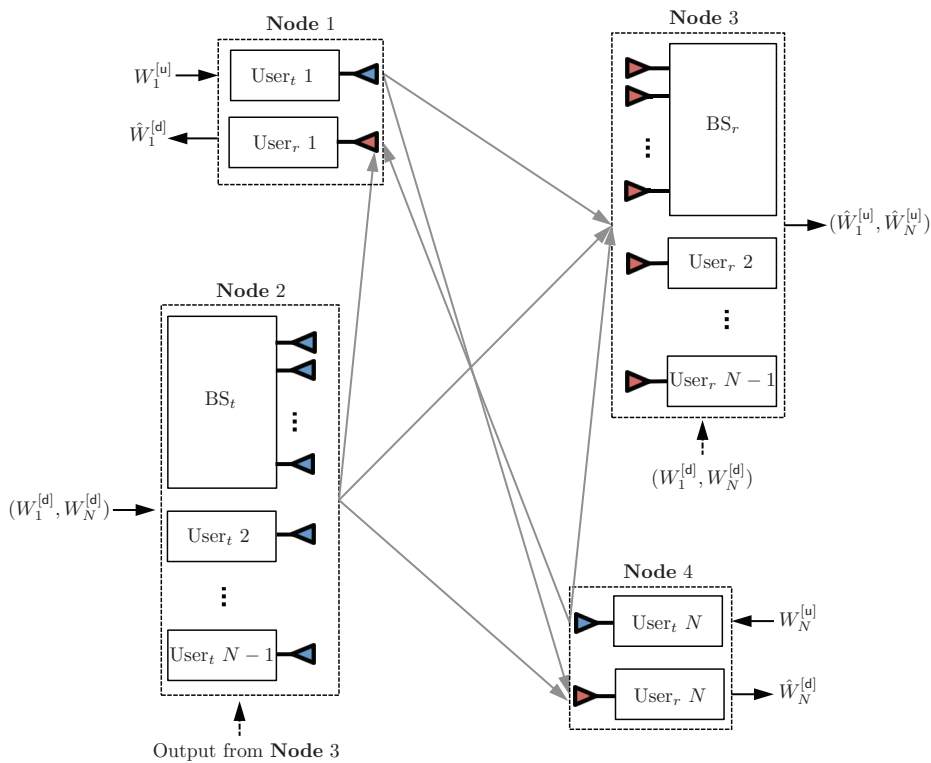}
\caption{Step 2: Cooperation between BSs and users.} \label{Upperbound_xnetwork_step2}
\end{figure}

\item {\bf Step 2}:
As shown in Figure~\ref{Upperbound_xnetwork_step2}, we allow full cooperation among user$_t~1$ and user$_r~1$, among BS$_t$, user$_t~2$ to user$_t~N-1$, among BS$_r$, user$_r~2$ to user$_r~N-1$, and among user$_t~N$ and user$_r~N$, each of which is called Nodes 1,2,3, and 4 respectively.
Because of such cooperation, the set of $(W^{[{\sf d}]}_2,\cdots,W^{[{\sf d}]}_{N-1},W^{[{\sf u}]}_2,\cdots,W^{[{\sf u}]}_{N-1})$ is priorly known at Node 3 as side information, so that Node 3 is able to attain those messages without communication.
Hence, we delete those messages in the figure without loss of generality.
In the end, Node 1 wishes to send $W_1^{[{\sf u}]}$ and estimate $W_1^{[{\sf d}]}$, Node 2 wishes to send $(W_1^{[{\sf d}]},W_N^{[{\sf d}]})$ with the help of output feedback from Node 3, i.e., the set of all output signals received by the components consisting of Node 3,  Node 3 wishes to estimate $(W_1^{[{\sf u}]},W_N^{[{\sf u}]})$ with the help of message side information $(W_1^{[{\sf d}]},W_N^{[{\sf d}]})$, and Node 4 wishes to send $W_N^{[{\sf u}]}$ and estimate $W_N^{[{\sf d}]}$.\footnote{The full cooperation assumption implies that both output feedback and message side information are available at Nodes 1 and 4.}
Since the network in Figure~\ref{Upperbound_xnetwork_step2} assumes cooperation between some nodes and allow more information for encoding and decoding, it provides an outer bound on the DoF region of the network in Figure~\ref{Upperbound_xnetwork_step1}.

\begin{figure}[!t]
\centering
\includegraphics[scale=0.75]{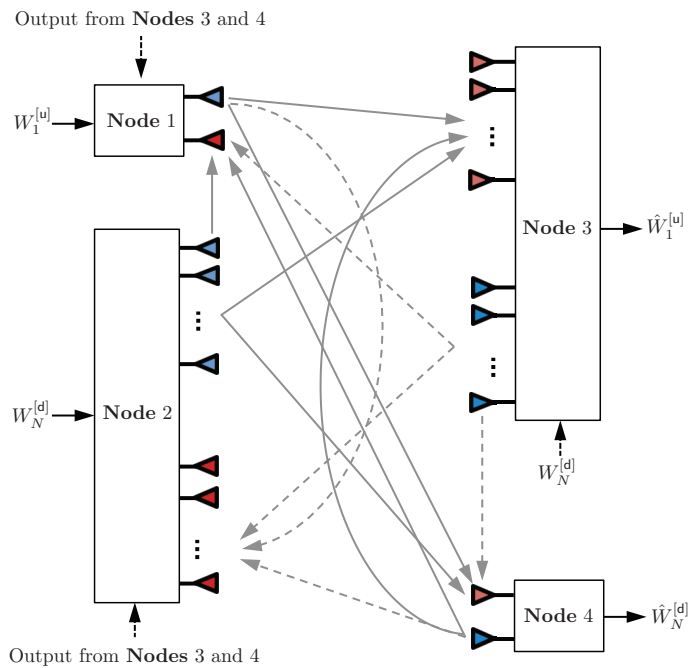}
\caption{Step 3: Eliminate all the messages except $W_1^{[\sf u]}$ and $W_N^{[\sf d]}$ and create more links and output feedback.} \label{Upperbound_xnetwork_step3}
\vspace{-0.2in}
\end{figure}

\item {\bf Step 3}: We now focus on an upper bound on $d^{[\sf u]}_1+d^{[\sf d]}_N$.
We first eliminate all the messages except $W_1^{[\sf u]}$ and $W_N^{[\sf d]}$, which does not decrease $d^{[\sf u]}_1+d^{[\sf d]}_N$~\cite{Jafar08}.
Then we provide $M_1+N-1$ receive antennas at Node 2 and $M_2+N-1$ transmit antennas at Node 3 and allow FD operation at all nodes, which creates more links illustrated as dashed links in Figure 13.
We further assume that output feedback from Nodes 3 and 4 is available at Nodes 1 and 2.
Obviously, adding more antennas at some nodes, allowing FD operation, and providing more output feedback for encoding do not decrease $d^{[\sf u]}_1+d^{[\sf d]}_N$.
\end{itemize}

As a result, the converted network in Figure \ref{Upperbound_xnetwork_step3} provides an upper bound on $d^{[\sf u]}_1+d^{[\sf d]}_N$ achievable by the original ($M_1$, $M_2$, $N$) FD-BS--FD-user cellular network.
Note that the converted network in Figure \ref{Upperbound_xnetwork_step3} corresponds to the four-node X network studied in~\cite{Viveck1:09} except the fact that $W_N^{[\sf d]}$ is provided to Node 3 through a genie.
As stated in \cite[Section IV]{Viveck1:09}, providing this side information does not increase the sum DoF and, therefore, we have $d^{[\sf u]}_1+d^{[\sf d]}_N\leq 1$ from the result in~\cite{Viveck1:09}. In the same manner, we can establish
\begin{align} \label{eq:upper_bound_di_dj}
d^{[\sf u]}_i+d^{[\sf d]}_N\leq j
\end{align}
for $i,j\in[1:N]$ with $i\neq j$.
By summing \eqref{eq:upper_bound_di_dj} for all $i,j\in[1:N]$ with $i\neq j$, we finally have
\begin{align}
d_{\Sigma,2}=\sum_{i=1}^{N}d^{[\sf u]}_{i}+\sum_{i=1}^{N}d^{[\sf d]}_{i}\leq N.
\end{align}

\section{Discussions} \label{sec:discussion}
In this section, we briefly discuss about the impacts of self-interference and UL and DL scheduling on DoF.

\begin{figure}[!t]
\centering
\includegraphics[scale=0.7]{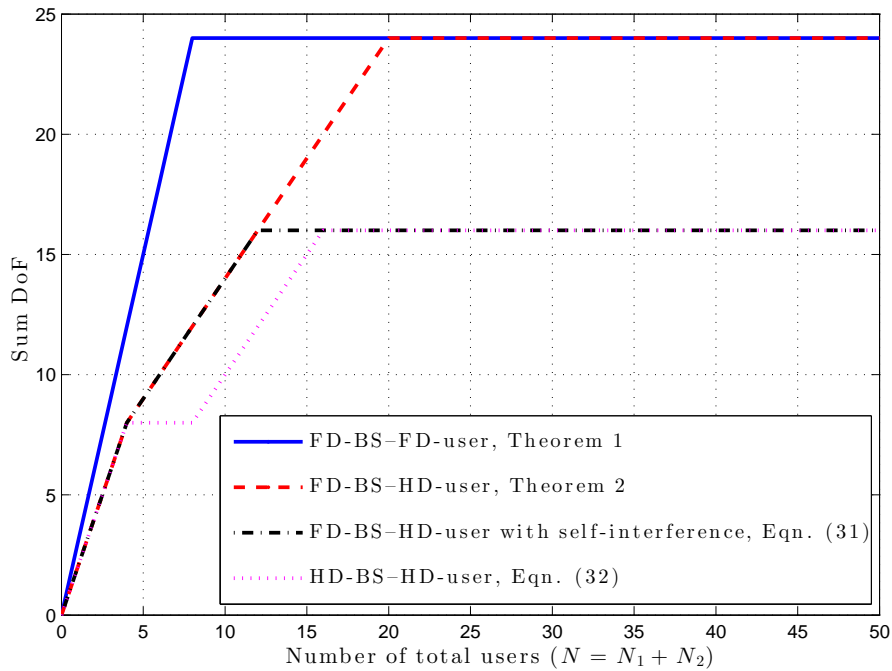}
\caption{Sum DoFs for $M_1=16$, $M_2=8$, and $N_2=2N_1$. } \label{DoF_fullduplex}
\end{figure}

\subsection{Impacts of Self-Interference on DoF}
Throughout the paper, we assumed that there is no self-interference within the BS during FD operation. However, in a practical FD BS, the amount of residual self-interference may not be negligible due to insufficient self-interference suppression or imperfect self-interference cancellation from the priorly known message information at the receiver side~\cite{Hong13}. In this subsection, we will discuss the impacts of such self-interference on the sum DoF.
 Note that when there exists self-interference within the BS of the $(M_1,M_2,N_1,N_2)$ FD-BS--HD-user cellular network, the sum DoF is given by
\begin{align} \label{eq:dof_ul_dl}
\min \Bigg\{ &\frac{N_1N_2+\min(M_1,N_1)(N_1-N_2)^{+}+\min(M_2,N_2)(N_2-N_1)^+}{\max(N_1,N_2)}\nonumber,\\
   &M_1+N_2,M_2+N_1,\max(M_1,M_2),\max(N_1,N_2)\Bigg\}
\end{align}
from the result of~\cite{JeonSuh:14}, by interpreting inter-BS interference in~\cite{JeonSuh:14} as self-interference within the BS.
Obviously, if we restrict for the BS to operate either UL or DL only, then the sum DoF is given by
\begin{align} \label{eq:dof_hd_only}
\max(\min(M_1,N_1),\min(M_2,N_2)).
\end{align}

To see the effect of self-interference on the sum DoF, let us consider the case where $M_1=16$, $M_2=8$, and $N_2=2N_1$ as an example. We plot the sum DoFs as a function of the number of total users $N=N_1+N_2$ in Figure~\ref{DoF_fullduplex}.
For comparison, we also plot the sum DoF of the FD-BS--FD-user cellular network when the number of FD users is given by $N$.
As shown in the figure, the FD-BS--HD-user cellular network is able to achieve the same sum DoF attained by the FD-BS--FD-user cellular network when $N$ is large enough.
However, FD capability at the user side is beneficial to improve the sum DoF for small $N$.
Interestingly, even when there exists self-interference, FD operation at the BS alone can increase the sum DoF in a certain regime. However, the sum DoF collapses to that of the HD-BS--HD-user cellular network when $N$ is large enough. Note that similar tendencies can be observed for general $(M_1,M_2,N_1,N_2)$. Therefore, from these observations, self-interference suppression or cancellation is of crucial importance for fully utilising the potential of FD networks.

\subsection{Effects of Scheduling on DoF}
In this subsection, we discuss the effects of HD user scheduling on the sum DoF.
Suppose that there exist total $N$ HD users and we are able to coordinate the operational mode of each of these users, i.e., dividing them into $N_1$ DL users and $N_2$ UL users, where $N_1+N_2=N$.
Obviously, the sum DoF varies with the values of $N_1$ and $N_2$ from Theorem \ref{thm:dof_1}.

\begin{figure}[!t]
\centering
\includegraphics[scale=0.7]{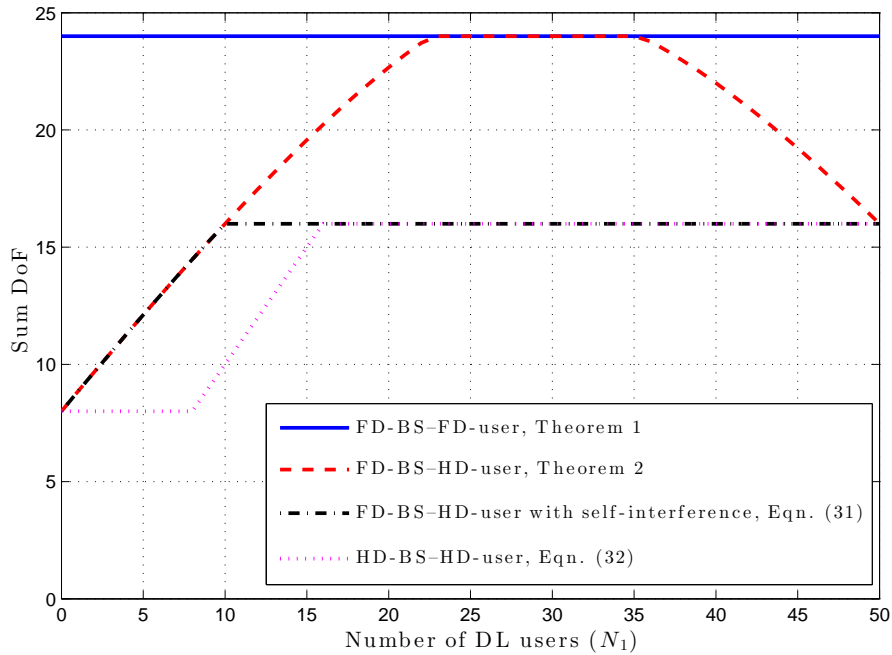}
\caption{Sum DoFs for $M_1=16$, $M_2=8$, and $N_2=50-N_1$. } \label{DoF_uplink}
\end{figure}

\begin{figure}[!t]
\centering
\includegraphics[scale=0.7]{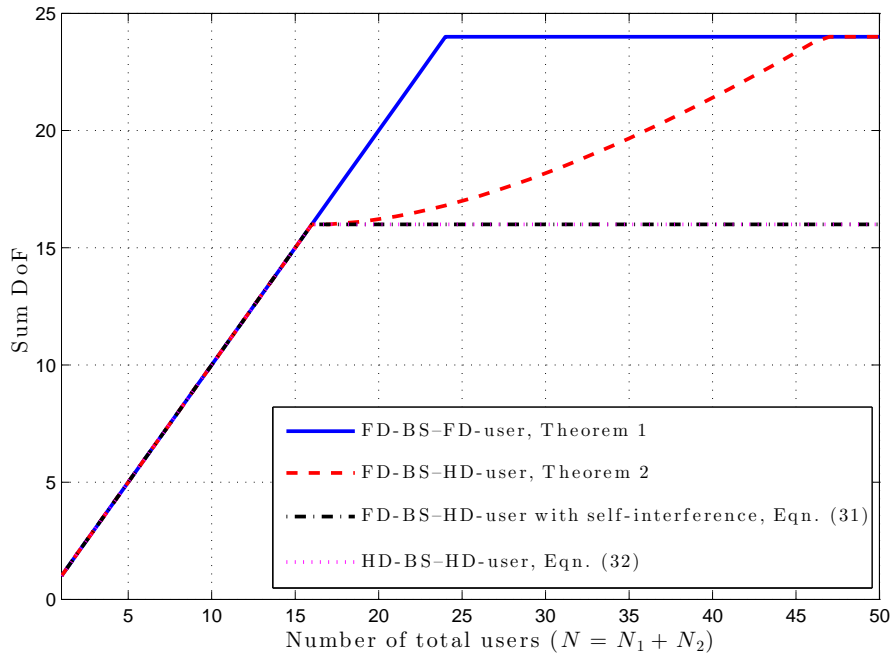}
\caption{Optimal sum DoFs for $M_1=16$ and $M_2=8$. } \label{DoF_uplink2}
\end{figure}

As an example, consider again the case where $M_1=16$ and $M_2=8$. First, we fix the total number of users $N$ ($=N_1+N_2$) as 50 and plot the sum DoF of the FD-BS--HD-user cellular network with and without self-interference suppression as a function of $N_1$ in Figure~\ref{DoF_uplink}. For comparison, we also plot the sum DoFs of the FD-BS--FD-user cellular network and the HD-BS--HD-user cellular network. As depicted in Figure~\ref{DoF_uplink}, except the FD-BS--FD-user cellular network, the achievable sum DoFs vary with $N_1$, and we can maximize the sum DoF of each network by optimally choosing $N_1$ and $N_2$.

Now, we plot the sum DoFs as a function of the number of total users $N$ in Figure~\ref{DoF_uplink2}. Here, for each $N$, we choose $N_1$ and $N_2$ to achieve the optimal sum DoFs. As seen in Figure~\ref{DoF_uplink2}, when there is no self-interference, the optimal sum DoF of the FD-BS--HD-user cellular network approaches to that of the FD-BS--FD-user cellular network and reaches the same sum DoF when $N$ is large enough. However, when there exists self-interference, the optimal sum DoFs of the FD-BS--HD-user cellular network and the HD-BS--HD-user cellular network are the same for any $N$.
This statement is also true for general $M_1$ and $M_2$ since the optimal scheduling for the FD-BS--HD-user cellular network with self-interference is to operate all HD users as either UL or DL, which can be easily verified from  \eqref{eq:dof_ul_dl}.
Therefore, for the case in which the optimal scheduling is allowed,  FD operation at the BS is not required in terms of DoF if there exists self-interference.

\section{Conclusion}
In this paper, we have studied the sum DoFs of cellular networks with a multiantenna FD BS and HD mobile users and with a multiantenna FD BS and FD mobile users. For our main contribution, we have completely characterized the sum DoFs of these networks. To be specific, for achievability, the key idea was to fully utilize the intended signal dimensions by minimizing the inter-user interference dimensions via IA for the UL transmission and by minimizing the intra-cell interference dimensions via multiantenna nulling for the DL transmission.
For converse, we have provided a matching upper bound that shows the optimality of the proposed scheme.
As a consequence of the result, we have shown that even when inter-user interference exists, FD operation at the BS can double the sum DoF over the HD only networks when the number of users becomes large enough as compared to the number of antennas at the BS,  for both the FD-BS--HD-user cellular network and the FD-BS--FD-user cellular network.

Our work can be extended to several interesting directions: (1)
Extending to multi-cell scenarios in which inter-cell interference exists; (2) Extending to the case in which mobile users have multiple antennas; (3) Extending to the
cases in which channel state information at
transmitters (CSIT) is not available or delayed.


\end{document}